%
%
%
%
%
%
%
%
\documentclass[10pt,twoside]{article}
\usepackage{amsmath, amssymb, amsfonts, theorem, hyperref}
%
%
%
%
\def\YEAR{\year}\newcount\VOL\VOL=\YEAR\advance\VOL by-1995
\def\firstpage{1}\def\lastpage{1000}
\def\received{}\def\revised{}
\def\communicated{}

\makeatletter
\def\magnification{\afterassignment\m@g\count@}
\def\m@g{\mag=\count@\hsize6.5truein\vsize8.9truein\dimen\footins8truein}
\makeatother

\oddsidemargin1.91cm\evensidemargin1.91cm\voffset1.4cm

\textwidth12.0cm\textheight19.0cm

\font\eightrm=cmr8
\font\caps=cmcsc10                    
\font\Caps=cmcsc10 scaled \magstep1   

%


\pagestyle{myheadings}
\pagenumbering{arabic}
\setcounter{page}{\firstpage}

\makeatletter
\setlength\topmargin {14\p@}
\setlength\headsep   {15\p@}  
\setlength\footskip  {25\p@}  
\setlength\parindent {20\p@} 
\@specialpagefalse\headheight=8.5pt
\def\DocMath{}
\renewcommand{\@evenhead}{%
    \ifnum\thepage>\lastpage\rlap{\thepage}\hfill%
    \else\rlap{\thepage}\slshape\leftmark\hfill{\caps\SAuthor}\hfill\fi}%
\renewcommand{\@oddhead}{%
    \ifnum\thepage=\firstpage{\DocMath\hfill\llap{\thepage}}%
    \else{\slshape\rightmark}\hfill{\caps\STitle}\hfill\llap{\thepage}\fi}%
\makeatother

\def\TSkip{\bigskip}
\newbox\TheTitle{\obeylines\gdef\GetTitle #1
\ShortTitle  #2
\SubTitle    #3
\Author      #4
\ShortAuthor #5
\EndTitle
{\setbox\TheTitle=\vbox{\baselineskip=20pt\let\par=\cr\obeylines%
\halign{\centerline{\Caps##}\cr\noalign{\medskip}\cr#1\cr}}%
	\copy\TheTitle\TSkip\TSkip%
\def\next{#2}\ifx\next\empty\gdef\STitle{#1}\else\gdef\STitle{#2}\fi%
\def\next{#3}\ifx\next\empty%
    \else\setbox\TheTitle=\vbox{\baselineskip=20pt\let\par=\cr\obeylines%
    \halign{\centerline{\caps##} #3\cr}}\copy\TheTitle\TSkip\TSkip\fi%
\centerline{\caps #4}\TSkip\TSkip%
\def\next{#5}\ifx\next\empty\gdef\SAuthor{#4}\else\gdef\SAuthor{#5}\fi%
\ifx\received\empty\relax
    \else\centerline{\eightrm Received: \received}\fi%
\ifx\revised\empty\TSkip%
    \else\centerline{\eightrm Revised: \revised}\TSkip\fi%
\ifx\communicated\empty\relax
    \else\centerline{\eightrm Communicated by \communicated}\fi\TSkip\TSkip%
\catcode'015=5}}\def\Title{\obeylines\GetTitle}
\def\Abstract{\begingroup\narrower
    \parskip=\medskipamount\parindent=0pt{\caps Abstract. }}
\def\EndAbstract{\par\endgroup\TSkip}

\long\def\MSC#1\EndMSC{\def\arg{#1}\ifx\arg\empty\relax\else
     {\par\narrower\noindent%
     2000 Mathematics Subject Classification: #1\par}\fi}

\long\def\KEY#1\EndKEY{\def\arg{#1}\ifx\arg\empty\relax\else
	{\par\narrower\noindent Keywords and Phrases: #1\par}\fi\TSkip}

\newbox\TheAdd\def\Addresses{\vfill\copy\TheAdd\vfill
    \ifodd\number\lastpage\vfill\eject\phantom{.}\vfill\eject\fi}
{\obeylines\gdef\GetAddress #1
\Address #2 
\Address #3
\Address #4
\EndAddress
{\def\xs{4.3truecm}\parindent=0pt
\setbox0=\vtop{{\obeylines\hsize=\xs#1\par}}\def\next{#2}
\ifx\next\empty 
     \setbox\TheAdd=\hbox to\hsize{\hfill\copy0\hfill}
\else\setbox1=\vtop{{\obeylines\hsize=\xs#2\par}}\def\next{#3}
\ifx\next\empty 
     \setbox\TheAdd=\hbox to\hsize{\hfill\copy0\hfill\copy1\hfill}
\else\setbox2=\vtop{{\obeylines\hsize=\xs#3\par}}\def\next{#4}
\ifx\next\empty\ 
     \setbox\TheAdd=\vtop{\hbox to\hsize{\hfill\copy0\hfill\copy1\hfill}
                \vskip20pt\hbox to\hsize{\hfill\copy2\hfill}}
\else\setbox3=\vtop{{\obeylines\hsize=\xs#4\par}}
     \setbox\TheAdd=\vtop{\hbox to\hsize{\hfill\copy0\hfill\copy1\hfill}
	        \vskip20pt\hbox to\hsize{\hfill\copy2\hfill\copy3\hfill}}
\fi\fi\fi\catcode'015=5}}\gdef\Address{\obeylines\GetAddress}

\hfuzz=0.1pt\tolerance=2000\emergencystretch=20pt\overfullrule=5pt

\newcommand{\x}{\mathbf{x}}
\newcommand{\X}{\mathbf{X}}
\newcommand{\y}{\mathbf{y}}

\newcommand{\p}{\mathbf{P}}
\newcommand{\al}{\boldsymbol{\alpha}}

\newcommand{\rxy}{|\x- \y|}

\newcommand{\rhs}{$\mathrm{r.\,h.\,s.}$ }

\newtheorem{theorem}{Theorem}
\newtheorem{lemma}[theorem]{Lemma}
\newtheorem{proposition}[theorem]{Proposition}
\newtheorem{remark}[theorem]{Remark}
\newtheorem{assumption}[theorem]{Assumption}
\newtheorem{corollary}[theorem]{Corollary}
\newenvironment{proof}[1]{\noindent{\bf Proof{#1}.}}{$\bullet$}

\DeclareMathOperator{\Dom}{Dom}
\DeclareMathOperator{\supp}{supp}

\makeatletter
\@addtoreset{equation}{section}
\makeatother

\begin{document}
\Title Essential Spectrum of Multiparticle
Brown--Ravenhall Operators in External Field
\ShortTitle Essential Spectrum of Brown--Ravenhall Operators
\SubTitle   
\Author Sergey Morozov
\ShortAuthor 
\EndTitle
\Abstract The essential spectrum of multiparticle Brown--Ravenhall operators is characterized in terms of two--cluster decompositions for a wide class of external fields and interparticle interactions and for the systems with prescribed symmetries.
\EndAbstract
\MSC 81V55, 81Q10
\EndMSC
\KEY 
\EndKEY
\Address Mathematisches~Institut LMU M\"unchen
Theresienstr. 39
80333 Munich, Germany
\Address
\Address
\Address
\EndAddress
\section{Introduction}

It is well known that the eigenvalues of the one--particle Dirac operator are in much better accordance with the spectroscopic data then the eigenvalues of the Schr\"odinger operator. However, due to the presence of the negative continuum of positronic states the many--particle Coulomb--Dirac operator has no eigenvalues and its essential spectrum is the whole real line. Coupling with the quantized electromagnetic field does not correct this situation. However, there are ways to construct a \emph{semibounded} operator which will take the relativistic effects into account. Such models, although nonlocal, find their applications in numerical studies of heavy elements and cosmology, where the relativistic effects cannot be ignored.

The most obvious choice of the kinetic energy (sometimes called Chandra\-sek\-har or Herbst operator) given by $\sqrt{\mathbf p^2c^2+ m^2c^4}$, $\mathbf p$ and $m$ being the momentum and mass of the particle, suffers from the lack of semiboundedness for nuclear charges exceeding $87$, as shown in \cite{Herbst1977}. Most other operators considered in the literature are obtained by reducing the (multiparticle) Dirac operator onto some subspace on which it becomes semibounded. One of such models, extensively studied recently, is by Brown and Ravenhall \cite{BrownRavenhall1951}, see also Bethe and Salpeter \cite{BetheSalpeter1957}, Sucher \cite{Sucher1980, Sucher1987}. In this model we require that every particle stays in the positive spectral subspace of the \emph{free} Dirac operator. Since the multiplication by interaction potentials does not leave this subspace invariant, the potential energy terms should be projected back by the corresponding projector.

The mathematical study of the Brown--Ravenhall operator started from the one--particle case in the article of Evans, Perry, and Siedentop \cite{EvansPerrySiedentop1996}. The authors have proved that the atomic Hamiltonian is semibounded from below for nuclear charges not exceeding $124$. This makes the Brown--Ravenhall model applicable to all existing elements. It was also proved in \cite{EvansPerrySiedentop1996} that the essential spectrum of the one--particle atomic Brown--Ravenhall operator is $[mc^2, \infty)$ with $m$ being the mass of the particle, and that the singular continuous spectrum is empty.

Further studies of the Brown--Ravenhall operator include the improved lower bounds by Tix \cite{Tix1997, Tix1998} (see also Burenkov and Evans \cite{BurenkovEvans1998}) in the atomic case, the proof that the eigenvalues of Brown--Ravenhall operator are strictly bigger than those of the one--particle Dirac operator by Griesemer et al. \cite{GriesemerLewisSiedentop1999}, proofs of stability of one-electron molecule by Balinsky and Evans \cite{BalinskyEvans1999} and the proof of stability of matter by Hoever and Siedentop \cite{HoeverSiedentop1999}.
The essential spectrum of the many--particle operator was characterized by Jakuba\ss a--Amundsen \cite{Jakubassa2005, Jakubassa2007}, and Morozov and Vugalter \cite{MorozovVugalter2006} in terms of two--cluster decompositions. This is usually referred to as HVZ theorem after the well known result for the many particle Schr\"odinger oprator. It is also shown in \cite{MorozovVugalter2006} that the neutral atoms or positively charged atomic ions have infinitely many bound states.

In all these previous studies the nuclei were considered as fixed sources of the external field, the particles were assumed to be identical, and the interaction potentials were purely Coulombic. 

In this paper we generalize the HVZ theorem of \cite{Jakubassa2005, Jakubassa2007, MorozovVugalter2006} as follows: We allow any number of (massive) particles of the system to be identical. We allow quite general matrix interaction potentials. In particular, our result applies in the presence of the magnetic fields if the vector potential decays at infinity in some weak sense. Another problem we address is the reduction to any irreducible representations of the groups of rotation--reflection symmetry and permutations of identical particles. Note that such a reduction allows to analyze the eigenvalues of some irreducible representations even if they are embedded into the continuous spectrum of another representation. Existence of such embedded eigenvalues is well known for atomic and molecular Schr\"odinger operators.

From the technical point of view, the nonlocality of the model due to the presence of the spectral projections of the free Dirac operator is overcome with the same ideas as in \cite{MorozovVugalter2006}. One more complication should be stressed: for the Brown--Ravenhall operator the center of mass motion cannot be separated in the same way as it is usually done for Schr\"odinger operators, where the complete Hamiltonian without external field can be represented in suitable coordinates as
\[
\mathcal{H}= A\otimes I+ I\otimes B,
\]
where $A$ describes the free motion of the center of mass and $B$ is the internal Hamiltonian of the system (see \cite{JorgensWeidmann1973}). Such a decomposition appears to be especially fruitful in the presence of rotation symmetries. Since it cannot be obtained for pseudorelativistic operators due to the form of kinetic energy, we have used completely different approach based on the commutation of the Hamiltonian with the \emph{absolute value} of the total momentum of the system.

Note that the proof of the HVZ theorem for a system of particles described by the Chandrasekhar operator, was till now not known in presence of rotation--reflection symmetries (see the article of Lewis, Siedentop and Vugalter \cite{LewisSiedentopVugalter1997} for the proof without symmetries). Such a proof can now be obtained as a simplified modification of the proof given in this paper.

In Section~\ref{setup section} we introduce the model and make the necessary assumptions. At the end of this section we formulate the main result in Theorem~\ref{HVZ theorem}. The rest of the article contains the proof of this theorem.

\section{Setup and Main Result}\label{setup section}

$[A, B]= AB- BA$ is the commutator of two operators.
$\langle\cdot, \cdot\rangle$ and $\|\cdot\|$ stand for the inner product and the norm in $L_2(\mathbb{R}^{3d}, \mathbb{C}^{4^d})$, where $d$ is the dimension of the underlying configuration space. Irrelevant constants are denoted by $C$. $I_\Omega$ is the indicator function of the set $\Omega$. For a selfadjoint operator $A$ we denote its spectrum and the corresponding sesquilinear form by $\sigma(A)$ and $\langle A\cdot, \cdot\rangle= \langle\cdot, A\cdot\rangle$, respectively. We use the conventional units $\hbar= c= 1$. Sometimes we denote the unitary Fourier transform by $\widehat\cdot$.

In the Hilbert space $L_2(\mathbb{R}^3, \mathbb{C}^4)$ the Dirac operator describing a particle of mass $m> 0$ is given by
\begin{equation*}\label{Dirac}
D_m= -i\al\cdot\nabla+ \beta m,
\end{equation*}
where $\al:= (\alpha_1, \alpha_2, \alpha_3)$ and $\beta$ are the $4\times 4$ Dirac matrices \cite{Thaller1992}.
The form domain of $D_m$ is the Sobolev space $H^{1/2}(\mathbb{R}^3, \mathbb{C}^4)$ and the spectrum is $(-\infty, -m]\cup[m, +\infty)$. Let $\Lambda_m$ be the projector onto the positive spectral subspace of $D_m$:
\begin{equation}\label{Lambda_m}
\Lambda_m:=\frac{1}{2}+ \frac{-i\al\cdot\nabla+ \beta m}{2\sqrt{-\Delta+ m^2}}.
\end{equation}
We consider a finite system of $N$ particles with positive masses $m_n, \: n= 1, \dots, N$. To simplify the notation we write $D_n$ and $\Lambda_n$ for $D_{m_n}$ and $\Lambda_{m_n}$, respectively.
Let $\mathfrak{H}_N:= \underset{n= 1}{\overset{N}{\otimes}}\Lambda_nL_2(\mathbb{R}^3, \mathbb{C}^4)$ be the Hilbert space with the inner product induced by those of $\underset{n= 1}{\overset{N}{\otimes}}L_2(\mathbb{R}^3, \mathbb{C}^4)\cong L_2(\mathbb{R}^{3N}, \mathbb{C}^{4^N})$.
In this space the $N$--particle Brown--Ravenhall operator is formally given by
\begin{equation}\label{H_N'}
\mathcal{H}_N= \Lambda^N\bigg(\underset{n= 1}{\overset{N}{\sum}}(D_n+ V_n)+ \underset{n< j}{\overset{N}{\sum}}U_{nj}\bigg)\Lambda^N,
\end{equation}
with
\begin{equation}\label{Lambda^N}
\Lambda^N:= \prod_{n= 1}^N\Lambda_n= \underset{n= 1}{\overset N\otimes}\Lambda_n.
\end{equation}
Here and below the indices $n$ and $j$ indicate the particle, on whose coordinates the corresponding operator acts.
In \eqref{H_N'} $V_n$ is the external field potential for the $n^{th}$ particle, i.e., the operator of multiplication by a hermitian $4\times 4$ matrix--function $V_n(\x_n)$, $n= 1, \dots, N$, and $U_{nj}$ is the potential energy of the interaction between the $n^{th}$ and $j^{th}$ particles, given by the operator of multiplication by a hermitian $16\times 16$ matrix--function $U_{nj}(\x_n- \x_j)$, $n< j= 1, \dots, N$.
More explicitly, if we let $s_j\in \{1, 2, 3, 4\}$ be the spinor index of the $j^{th}$ particle, then
\begin{equation*}\begin{split}\label{explicit V_n}
&(V_n\psi)(\x_1, s_1; \dots; \x_n, s_n; \dots; \x_N, s_N)\\ &:= \sum_{\widetilde s_n}V_n^{s_n, \widetilde s_n}(\x_n)\psi(\x_1, s_1; \dots; \x_n, \widetilde s_n; \dots; \x_N, s_N),
\end{split}\end{equation*}
and
\begin{equation*}\label{explicit U_nj}\begin{split}
&(U_{nj}\psi)(\x_1, s_1; \dots; \x_n, s_n; \dots; \x_j, s_j; \dots; \x_N, s_N)\\ &:= \sum_{\widetilde s_n, \widetilde s_j}U_{nj}^{s_ns_j, \widetilde s_n\widetilde s_j}(\x_n- \x_j)\psi(\x_1,  s_1; \dots; \x_n, \widetilde s_n; \dots; \x_j, \widetilde s_j; \dots; \x_N,  s_N).
\end{split}\end{equation*}

Before we make other assumptions on the interaction potentials, let us consider possible decompositions of the system into two clusters. Let $Z= (Z_1, Z_2)$ be a decomposition of the index set $I:= \{1, \dots, N\}$ into two disjoint subsets:
\begin{equation*}\label{decomposition}
I= Z_1\cup Z_2, \quad Z_1\cap Z_2= \varnothing.
\end{equation*}
Let
\begin{equation}\label{N_j}
N_j:= \#Z_j, \quad j= 1, 2
\end{equation}
be the number of particles in each cluster. We will write $n\# j$ if $n$ and $j$ belong to different clusters.
Let
\begin{equation}\label{H_Z_1}
\mathcal{H}_{Z, 1}:= \sum_{n\in Z_1}(D_n+ V_n)+ \sum_{\substack{n,j\in Z_1\\ n< j}}U_{nj},
\end{equation}
\begin{equation}\label{H_Z_2}
\mathcal{H}_{Z, 2}:= \sum_{n\in Z_2}D_n+ \sum_{\substack{n,j\in Z_2\\ n< j}}U_{nj}.
\end{equation}
We omit $\mathcal{H}_{Z, j}$ if $Z_j= \varnothing$, $j= 1, 2$.
Let us introduce the operators corresponding to noninteracting clusters, with the second cluster transferred far away from the sources of the external field:
\begin{equation}\label{reduced spaces}
\widetilde{\mathcal{H}}_{Z, j}:= \Lambda_{Z, j}\mathcal{H}_{Z, j}\Lambda_{Z, j},\quad \textrm{in}\quad \mathfrak{H}_{Z, j}:= \underset{n\in Z_j}{\otimes}\Lambda_nL_2(\mathbb{R}^3, \mathbb{C}^4), \quad j= 1, 2,
\end{equation}
where
\begin{equation*}\label{Lambda_Zj}
\Lambda_{Z, j}:= \underset{n\in Z_j}{\prod}\Lambda_n.
\end{equation*}

We make the following assumptions:

\begin{assumption}\label{upper assumption}
There exists $C>  0$ such that for any $Z$ and $j= 1, 2$
\begin{equation}\label{upper estimate}
\big|\langle\mathcal{H}_{Z, j}\varphi, \psi\rangle\big|\leqslant C\|\varphi\|_{H^{1/2}}\|\psi\|_{H^{1/2}},\quad \textrm{for any}\quad \varphi, \psi\in\underset{n\in Z_j}{\otimes}H^{1/2}(\mathbb{R}^3, \mathbb{C}^4).
\end{equation}
\end{assumption}
For Coulomb interaction potentials \eqref{upper estimate} follows from Kato's inequality.

\begin{assumption}\label{lower assumption}
There exist $C_1> 0$ and $C_2\in \mathbb{R}$ such that for any $Z$
\begin{equation}\label{assumption 1}\begin{split}
\langle\widetilde{\mathcal{H}}_{Z, j}\psi, \psi\rangle\geqslant C_1\langle\sum_{n\in Z_j}D_n\psi, \psi\rangle- C_2\|\psi\|^2&, \\ \textrm{for any}\quad \psi\in\underset{n\in Z_j}{\otimes}\Lambda_nH^{1/2}(\mathbb{R}^3, \mathbb{C}^4)&, \quad j= 1, 2.
\end{split}\end{equation}
\end{assumption}

\begin{remark}\label{equivalence remark}
Note that for $\psi\in\underset{n\in Z_j}{\otimes}\Lambda_nH^{1/2}(\mathbb{R}^3, \mathbb{C}^4)$ the metric
\[
\langle\sum\limits_{n\in Z_j}D_n\psi, \psi\rangle^{1/2}= \big\|\sum_{n\in Z_j}|D_n|^{1/2}\psi\big\|
\]
is equivalent to the norm of $\psi$ in $\underset{n\in Z_j}{\otimes}H^{1/2}(\mathbb{R}^3, \mathbb{C}^4)$, since
\begin{equation}\label{square root formula}
\Lambda_nD_n\Lambda_n= \Lambda_n|D_n|\Lambda_n= \Lambda_n\sqrt{-\Delta+ m_n^2}\Lambda_n.
\end{equation}
\end{remark}
An equivalent formulation of Assumption~\ref{lower assumption} is that the operator $\widetilde{\mathcal{H}}_{Z, j}$ is semibounded from below even if we multiply all the interaction potentials by $1+ \varepsilon$ with $\varepsilon> 0$ small enough. This is only slightly more restrictive than the semiboundedness of $\widetilde{\mathcal{H}}_{Z, j}$.

\begin{assumption}\label{square integrability assumption}
For any $R> 0$ there exists a finite constant $C_R\geqslant 0$ such that
\begin{equation}\label{V and U integrability}
\sum_{n= 1}^N\bigg(\int_{|\x|\leqslant R}\big|V_n(\x)\big|^2d\x\bigg)^{1/2}+ \sum_{n< j}^N\bigg(\int_{|\x|\leqslant R}\big| U_{nj}(\x)\big|^2d\x\bigg)^{1/2}\leqslant C_R.
\end{equation}
\end{assumption}
This means that the interaction potentials are locally square integrable.

\begin{assumption}\label{decay assumption}
For any $\varepsilon> 0$ there exists $R> 0$ big enough such that for all $n< j= 1, \dots, N$
\begin{equation}\label{V_n decay}
\|V_n I_{\{|\x_n|> R\}}\psi\|\leqslant \varepsilon\big\||D_n|^{1/2}\psi\big\|, \quad\textrm{for all}\quad \psi\in H^{1/2}(\mathbb{R}^3, \mathbb{C}^4),
\end{equation}
and
\begin{equation}\label{U_nj decay}\begin{split}
\|U_{nj} I_{\{|\x_n- \x_j|> R\}}\varphi\|\leqslant \varepsilon\min\Big\{\big\||D_n|^{1/2}\varphi\big\|, \big\||D_j|^{1/2}\varphi\big\|\Big\},\\ \textrm{for all}\quad \varphi\in H^{1/2}(\mathbb{R}^6, \mathbb{C}^{16}).
\end{split}\end{equation}
\end{assumption}
By Remark~\ref{equivalence remark} this assumption is \emph{weaker} then the decay of $L_\infty$ norms of the interaction potentials at infinity.

It follows from \eqref{assumption 1} and Remark~\ref{equivalence remark} that for any $Z$ there exists a constant $C> 0$ such that for any $\psi\in\underset{n\in Z_j}{\otimes}\Lambda_nH^{1/2}(\mathbb{R}^3, \mathbb{C}^4)$
\begin{equation}\label{backward equivalence}
\|\psi\|_{H^{1/2}}^2\leqslant C\big(\langle\widetilde{\mathcal{H}}_{Z, j}\psi, \psi\rangle+ \|\psi\|^2\big), \quad j= 1, 2.
\end{equation}
Hence by Assumptions~\ref{upper assumption} and \ref{lower assumption}, the quadratic forms of operators \eqref{reduced spaces} (and, in particular, $\mathcal H_N$) are semibounded from below and closed on $\underset{n\in Z_j}{\otimes}\Lambda_nH^{1/2}(\mathbb{R}^3, \mathbb{C}^4)$. Thus these operators are well--defined in the form sense.

Some particles of the system (say, $k^{th}$ and $l^{th}$) can be identical (in which case $m_k= m_l$, $V_k= V_l$, and $U_{kj}= U_{lj}$ for all $j$). Then the operator $\mathcal H_N$ can be reduced to the subspace of functions which transform in a certain way under permutations of identical particles. The most physically motivated assumption is that any transposition of two identical particles should change the sign of the wave function $\psi\in\mathfrak{H}_N$ describing the system. This is the Pauli principle applied to the identical fermions (the model describes spin $1/2$ particles, thus fermions).

Let $\Pi$ be the subgroup of the symmetric group $\mathcal{S}_N$ generated by transpositions of identical particles. We denote the number of elements of $\Pi$ by $h_\Pi$. Let $E$ be some irreducible representation of $\Pi$ with dimension $d_E$ and character $\xi_E$. For $\psi\in \mathfrak{H}_N$ let
\begin{equation}\label{permutation projection}
P^E\psi:= \frac{d_E}{h_\Pi}\sum_{\pi\in \Pi}\overline{\xi_E(\pi)}\pi\psi,
\end{equation}
where $\pi$ is the operator of permutation:
\begin{equation*}\label{pi_s}
(\pi\psi)(\x_1,  s_1; \dots; \x_N,  s_N)= \psi(\x_{\pi^{-1}(1)},  s_{\pi^{-1}(1)}; \dots; \x_{\pi^{-1}(N)},  s_{\pi^{-1}(N)}).
\end{equation*}
Here $ s_1, \dots,  s_N$ are the spinor coordinates of the particles. The operator $P^E$ defined in \eqref{permutation projection} is the projector to the subspace of functions in $\mathfrak{H}$ which transform according to the representation $E$ of $\Pi$. Since any $\pi\in\Pi$ commutes with $\mathcal{H}_N$, $P^E$ reduces $\mathcal{H}_N$. Let $\mathcal{H}_N^E$ be the corresponding reduced selfadjoint operator in 
\begin{equation*}\label{H_N^E}
\mathfrak{H}_N^E:= P^E\mathfrak{H}_N.
\end{equation*}

For a decomposition $Z= (Z_1, Z_2)$ let $\Pi^Z_j$ be the group generated by transpositions of identical particles inside $Z_j$, $j= 1, 2$. For any irreducible representation $E_j$ of $\Pi^Z_j$ with dimension $d_{E_j}$ and character $\xi_{E_j}$ the projection to the space of functions in $\mathfrak{H}_j$ transforming according to $E_j$ under action of $\Pi^Z_j$ is given by
\begin{equation*}\label{partial permutation projection}
P^{E_j}\psi:= \frac{d_{E_j}}{h_{\Pi^Z_j}}\sum_{\pi\in \Pi^Z_j}\overline{\xi_{E_j}(\pi)}\pi\psi, \quad \psi\in\mathfrak{H}_{Z, j},
\end{equation*}
where $h_{\Pi^Z_j}$ is the cardinality of $\Pi^Z_j$. Projectors $P^{E_j}$ reduce operators $\widetilde{\mathcal{H}}_{Z, j}$. We introduce the reduced operators $\widetilde{\mathcal{H}}_{Z, j}^{E_j}$ in
\begin{equation*}\label{H_Z_j^E}
\mathfrak{H}_{Z, j}^{E_j}:= P^{E_j}\mathfrak{H}_{Z, j}, \quad j= 1, 2.
\end{equation*}
Given an irreducible representation $E$ of $\Pi$ and a decomposition $Z= (Z_1, Z_2)$, we have
\begin{equation}\label{symmetry embedding}
\mathfrak{H}_N^E\subset \underset{(E_1, E_2)}{\oplus}\big(\mathfrak{H}_{Z, 1}^{E_1}\otimes\mathfrak{H}_{Z, 2}^{E_2}\big),
\end{equation}
where $E_{1,2}$ are some irreducible representations of $\Pi^Z_j$. We write $(E_1, E_2)\underset{Z}{\prec} E$ if the corresponding term cannot be omitted on the \rhs of \eqref{symmetry embedding} without violation of the inclusion.

Apart from permutations of identical particles the operator $\mathcal{H}_N^E$ can have some rotation--reflection symmetries. Let $\gamma$ be an orthogonal transform in $\mathbb{R}^3$: the rotation around the axis directed along a unit vector $\mathbf{n}_\gamma$ through an angle $\varphi_\gamma$, possibly combined with the reflection $\x\mapsto -\x$. The corresponding unitary operator $O_\gamma$ acts on the functions $\psi\in\mathfrak{H}^N$ as (see \cite{Thaller1992}, Chapter 2)
\begin{equation*}\label{rotation}
(O_\gamma\psi)(\x_1, \dots, \x_N)= \prod_{n= 1}^Ne^{-i\varphi_\gamma\mathbf{n}_\gamma\cdot\mathbf{S}_n}\psi(\gamma^{-1}\x_1, \dots, \gamma^{-1}\x_N).
\end{equation*}
Here $\mathbf{S}_n= - \frac i4\alpha_n\wedge\alpha_n$ is the spin operator acting on the spinor coordinates of the $n^{th}$ particle.
The compact group of orthogonal transformations $\gamma$ such that $O_\gamma$ commutes with $V_n$ and $U_{nj}$ for all $n,\,j= 1, \dots, N$ (and thus with $\mathcal{H}_N^E$) we denote by $\Gamma$. Further, we decompose $\mathfrak{H}_N^E$ into the orthogonal sum
\begin{equation}\label{representations}
\mathfrak{H}_N^E= \underset{\alpha\in A}{\oplus}\mathfrak{H}_N^{D_\alpha, E},
\end{equation}
where $\mathfrak{H}_N^{D_\alpha, E}$ consists of functions which transform under $O_\gamma$ according to some irreducible representation $D_\alpha$ of $\Gamma$, and $A$ is the set indexing all such irreducible representations. The decomposition \eqref{representations} reduces $\mathcal{H}_N^E$. We denote the selfadjoint restrictions of $\mathcal{H}_N^E$ to $\mathfrak{H}_N^{D_\alpha, E}$ by $\mathcal{H}_N^{D_\alpha, E}$. For any fixed irreducible representation $D$ with dimension $d_D$ and character $\zeta_D$ the orthogonal projector in $\mathfrak{H}_N$ onto the subspace of functions which transform according to $D$ is
\begin{equation*}\label{projection on D}
P^D:= d_D\int_\Gamma\overline{\zeta_D(\gamma)}O_\gamma d\mu(\gamma),
\end{equation*}
where $\mu$ is the invariant probability measure on $\Gamma$.

For $j= 1, 2$ let $D_j$ be some irreducible representations of $\Gamma$ with dimensions $d_{D_j}$ and characters $\zeta_{D_j}$. The corresponding projectors in $\mathfrak{H}_{Z, j}$ are given by
\begin{equation*}\label{projection on D_j}
P^{D_j}= d_{D_j}\int_\Gamma\overline{\zeta_{D_j}(\gamma)}O_{\gamma, j}d\mu(\gamma),
\end{equation*}
where $O_{\gamma, j}$ is the restriction of $O_\gamma$ to $\mathfrak H_{Z, j}$:
\begin{equation*}\label{subsystem rotation}
(O_{\gamma,j}\psi)(\x_{n_1}, \dots, \x_{n_{N_j}})= \prod_{n\in Z_j}e^{-i\varphi_\gamma\mathbf{n}_\gamma\cdot\mathbf{S}_n}\psi(\gamma^{-1}\x_{n_1}, \dots, \gamma^{-1}\x_{n_{N_j}}).
\end{equation*}
Given representations $D_j$ and $E_j$, projector $P^{D_j}P^{E_j}= P^{E_j}P^{D_j}$ reduces $\widetilde{\mathcal{H}}_{Z, j}$. We denote the reduced operators in 
\begin{equation*}\label{H^D_j^E_j}
\mathfrak{H}_{Z, j}^{D_j, E_j}:= P^{D_j}P^{E_j}\mathfrak{H}_{Z, j}
\end{equation*}
by $\widetilde{\mathcal{H}}_{Z, j}^{D_j, E_j}$. Let
\begin{equation}\label{E}
\varkappa_j(Z, D_j, E_j):= \inf \sigma(\widetilde{\mathcal{H}}_{Z, j}^{D_j, E_j}).
\end{equation}
We write $(D_1, E_1; D_2, E_2)\underset Z\prec (D, E)$ if the corresponding term cannot be omitted on the \rhs of
\begin{equation*}\label{symmetry embedding 2}
\mathfrak{H}_N^{D, E}\subset \underset{\substack{(D_1, E_1)\\ (D_2, E_2)}}{\oplus}\big(\mathfrak{H}_{Z, 1}^{D_1, E_1}\otimes\mathfrak{H}_{Z, 2}^{D_2, E_2}\big)
\end{equation*}
without violation of the inclusion.
For $Z_2\neq \varnothing$ let
\begin{equation}\label{bottom}\begin{split}
&\varkappa(Z, D, E)\\ &:= \!\begin{cases}\inf\!\big\{\varkappa_1(Z, D_1, E_1)\!+\! \varkappa_2(Z, D_2, E_2): (D_1, E_1; D_2, E_2)\underset Z\prec (D, E)\big\}, &\!\!\!\! Z_1\neq \varnothing,\\ \varkappa_2(Z, D, E),& \!\!\!\! Z_1= \varnothing.\end{cases}
\end{split}\end{equation}

The main result of the article is
\begin{theorem}\label{HVZ theorem}
For $N\in\mathbb N$ let $D$ be some irreducible representation of $\Gamma$, and $E$ some irreducible representation of $\Pi$, such that $P^DP^E\neq 0$. Then
\begin{equation*}\label{HVZ}
\sigma_{\mathrm{ess}}(\mathcal{H}_N^{D, E})= \big[\varkappa(D, E), \infty\big),
\end{equation*}
where
\begin{equation}\label{E(D)}
\varkappa(D, E)= \min\big\{\varkappa(Z, D, E): Z=(Z_1, Z_2), \: Z_2\neq \varnothing\big\}.
\end{equation}
\end{theorem}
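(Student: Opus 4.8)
The plan is to prove the two inclusions $[\varkappa(D,E),\infty)\subseteq\sigma_{\mathrm{ess}}(\mathcal H_N^{D,E})$ and $\sigma_{\mathrm{ess}}(\mathcal H_N^{D,E})\subseteq[\varkappa(D,E),\infty)$ separately, following the classical geometric proof of the HVZ theorem but accommodating two features of the model: the nonlocality coming from the projectors $\Lambda_n$, so that the range of $\Lambda^N$ contains no compactly supported functions; and the impossibility of splitting off the centre of mass because of the relativistic kinetic energy. The nonlocality I would handle as in \cite{MorozovVugalter2006}, by estimating commutators of $\Lambda_n$ and of $\sqrt{-\Delta+m_n^2}$ with smooth cut--offs: these are pseudodifferential operators of negative order whose size is governed by the derivatives of the cut--off, hence negligible on a large geometric scale. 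The absence of a centre--of--mass splitting I would get around by fibering each translation--invariant cluster operator over the \emph{absolute value} of the corresponding total cluster momentum, which commutes both with the cluster Hamiltonian and with the rotation--reflection group $\Gamma$; this is what allows the representations $D_j$ to be tracked. A preliminary point to settle first: for every $Z$ with $Z_2\neq\varnothing$ and every admissible $(D_2,E_2)$, the map $p\mapsto\inf\sigma\big(\widetilde{\mathcal H}_{Z,2}^{D_2,E_2}(p)\big)$, where $\widetilde{\mathcal H}_{Z,2}^{D_2,E_2}(p)$ is the fibre at $|\mathbf P_{Z_2}|=p$ of $\widetilde{\mathcal H}_{Z,2}^{D_2,E_2}$ (here $\mathbf P_{Z_2}:=\sum_{n\in Z_2}\mathbf p_n$), is continuous, equals $\varkappa_2(Z,D_2,E_2)$ at its minimum, and tends to $+\infty$ as $p\to\infty$; the last fact follows from Assumption~\ref{lower assumption} together with the pointwise bound $\sum_{n\in Z_2}\sqrt{\mathbf p_n^2+m_n^2}\geqslant|\mathbf P_{Z_2}|$. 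In particular $\sigma\big(\widetilde{\mathcal H}_{Z,2}^{D_2,E_2}\big)\supseteq[\varkappa_2(Z,D_2,E_2),\infty)$, since the bottom of a fibre lies in its spectrum.

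For the inclusion $\supseteq$, fix $\lambda\geqslant0$, pick $Z=(Z_1,Z_2)$ with $Z_2\neq\varnothing$ realising the minimum in \eqref{E(D)}, and pick $(D_1,E_1;D_2,E_2)\underset{Z}{\prec}(D,E)$ with $\varkappa_1(Z,D_1,E_1)+\varkappa_2(Z,D_2,E_2)$ within $\epsilon$ of $\varkappa(D,E)$. Choose a normalised $\psi_1\in\mathfrak H_{Z,1}^{D_1,E_1}$ with $\langle\widetilde{\mathcal H}_{Z,1}\psi_1,\psi_1\rangle<\varkappa_1(Z,D_1,E_1)+\epsilon$ that, modulo the small tails permitted by the nonlocality estimates, is localised in a fixed ball; and, using the preliminary step, build a normalised $\psi_2^{(k)}\in\mathfrak H_{Z,2}^{D_2,E_2}$ obtained by superposing near--minimisers of $\widetilde{\mathcal H}_{Z,2}^{D_2,E_2}(p)$ over a thin shell $p\in[\rho_\lambda,\rho_\lambda+\tau_k]$, with $\tau_k\to0$ and $\rho_\lambda$ fixed by $\inf\sigma\big(\widetilde{\mathcal H}_{Z,2}^{D_2,E_2}(\rho_\lambda)\big)=\varkappa_2(Z,D_2,E_2)+\lambda$, the angular part chosen so that $\psi_2^{(k)}$ transforms according to $(D_2,E_2)$; thinning the shell spreads $\psi_2^{(k)}$ over a configuration region escaping to infinity (replacing the usual ``translation to infinity''), while $\|(\widetilde{\mathcal H}_{Z,2}-\varkappa_2(Z,D_2,E_2)-\lambda)\psi_2^{(k)}\|\to0$. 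Set $\Psi_k:=c_k\,P^DP^E(\psi_1\otimes\psi_2^{(k)})$. Because $\psi_1$ and $\psi_2^{(k)}$ sit in fixed $\Gamma$-- and $\Pi$--isotypic components and $(D_1,E_1;D_2,E_2)\underset{Z}{\prec}(D,E)$, with the angular part of $\psi_2^{(k)}$ chosen appropriately the norm $\|P^DP^E(\psi_1\otimes\psi_2^{(k)})\|$ has a lower bound independent of $k$, so $c_k$ stays bounded; permutations mixing the two clusters move $\psi_1$ and $\psi_2^{(k)}$ infinitely far apart, so those cross terms vanish and $\Psi_k\rightharpoonup0$. Finally, since $\Lambda^N=\Lambda_{Z,1}\Lambda_{Z,2}$ factors over the clusters and $\mathcal H_N=\widetilde{\mathcal H}_{Z,1}\otimes I+I\otimes\widetilde{\mathcal H}_{Z,2}+\Lambda^N\big(\sum_{n\in Z_2}V_n+\sum_{n\# j}U_{nj}\big)\Lambda^N$ on $\mathfrak H_N$, Assumption~\ref{decay assumption} together with \eqref{backward equivalence} lets one discard the last term on the support of $\psi_2^{(k)}$, giving $\|(\mathcal H_N^{D,E}-\varkappa_1-\varkappa_2-\lambda)\Psi_k\|\leqslant C\epsilon+o(1)$. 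A diagonal argument in $\epsilon\to0$ then yields $\varkappa(D,E)+\lambda\in\sigma_{\mathrm{ess}}(\mathcal H_N^{D,E})$; the case $Z_1=\varnothing$ is identical with $\psi_1$ omitted (and $\varkappa(Z,D,E)=\varkappa_2(Z,D,E)$ by \eqref{bottom}).

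For the inclusion $\subseteq$, I would use a Ruelle--Simon partition of unity $\{\chi_Z\}$ on $\mathbb R^{3N}$, homogeneous of degree zero at infinity, chosen $\Gamma$--invariant and compatible with the permutation symmetry, plus a compactly supported $\chi_0$ with $\sum_Z\chi_Z^2+\chi_0^2=1$, such that on $\supp\chi_Z$ all particles of $Z_2$ lie at distance $\gtrsim R$ from the remaining particles and from the origin. An IMS--type localisation formula, whose error is a sum of double commutators of the $\chi_Z,\chi_0$ with $\sqrt{-\Delta+m_n^2}$ and with the $\Lambda_n$ — all pseudodifferential of negative order and of size $O(R^{-1})$ — reduces matters to bounding $\chi_Z\mathcal H_N\chi_Z$ from below. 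On $\supp\chi_Z$, Assumptions~\ref{square integrability assumption}, \ref{decay assumption} and the equivalence \eqref{backward equivalence} let one replace $\mathcal H_N$ by $\widetilde{\mathcal H}_{Z,1}\otimes I+I\otimes\widetilde{\mathcal H}_{Z,2}$ up to an error form--small relative to $\mathcal H_N$; since $\chi_Z$ is symmetric, $\chi_Z\Psi$ for $\Psi\in\mathfrak H_N^{D,E}$ lands in $\bigoplus_{(D_1,E_1;D_2,E_2)\underset{Z}{\prec}(D,E)}\mathfrak H_{Z,1}^{D_1,E_1}\otimes\mathfrak H_{Z,2}^{D_2,E_2}$, on each summand of which $\widetilde{\mathcal H}_{Z,1}\otimes I+I\otimes\widetilde{\mathcal H}_{Z,2}\geqslant\varkappa_1(Z,D_1,E_1)+\varkappa_2(Z,D_2,E_2)\geqslant\varkappa(D,E)$ by \eqref{bottom} and \eqref{E(D)}. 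The $\chi_0$--term is multiplication by a compactly supported function, hence form--compact relative to $\mathcal H_N^{D,E}$ by the Rellich embedding $H^{1/2}\hookrightarrow L_2$ on bounded sets. Collecting everything, $\langle\mathcal H_N^{D,E}\Psi,\Psi\rangle\geqslant(\varkappa(D,E)-o(1))\|\Psi\|^2-\langle K_R\Psi,\Psi\rangle$ with $K_R$ compact, so $\inf\sigma_{\mathrm{ess}}(\mathcal H_N^{D,E})\geqslant\varkappa(D,E)-o(1)$, and $R\to\infty$ finishes the proof.

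The step I expect to be the main obstacle is the interplay of the nonlocality with the symmetry reduction: making the IMS localisation and the ``replacement of $\mathcal H_N$ on $\supp\chi_Z$'' rigorous in the presence of the operators $\Lambda_n$ (commuting cut--offs past them, controlling the resulting negative--order operators), and, on the other side, keeping the symmetry sectors under control throughout the Weyl--sequence construction — in particular securing the $k$--uniform lower bound on $\|P^DP^E(\psi_1\otimes\psi_2^{(k)})\|$ and correctly identifying which pairs $(D_1,E_1;D_2,E_2)$ occur — since here one cannot rely on a centre--of--mass separation and must work consistently in the $|\mathbf P|$--representation.
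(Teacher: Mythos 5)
Your proposal follows the same overall route as the paper: a Ruelle--Simon partition of unity and IMS localisation with commutator control of $\Lambda^N$ (paper's Sections~3--4) for $\sigma_{\mathrm{ess}}(\mathcal H_N^{D,E})\subseteq[\varkappa(D,E),\infty)$; and fibering the free cluster over $|\mathbf P_{Z_2}|$, continuity and growth of the fibre bottom $\mu(\mathfrak P)$, and Weyl sequences built from thin momentum shells inside $(D_j\otimes E_j)$--generating subspaces (paper's Sections~5--6) for the absence of gaps. Two places where your sketch glosses over what the paper actually does. First, for the compact region the paper does not invoke Rellich compactness directly: since $\Lambda^N\chi_0\psi$ is not compactly supported and $\mathcal H_N^{D,E}$ can be negative, it instead uses Lemma~\ref{lemma2} to project away a finite--dimensional set so that $\widehat{\chi_0\psi}$ concentrates at high kinetic energy, giving a quantitative lower bound on a fixed finite--codimensional subspace; your semiboundedness--plus--Rellich variant is a valid substitute once one uses \eqref{backward equivalence} to make $\chi_0^2$ relatively form--compact, but you should make clear the term to control is $\langle\mathcal H_N^{D,E}\Lambda^N\chi_0\psi,\Lambda^N\chi_0\psi\rangle$, not a multiplication operator, so both the semiboundedness of $\mathcal H_N^{D,E}$ and the smallness of $[\Lambda^N,\chi_0]$ enter. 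Second, the cross terms with permutations mixing the clusters do not vanish because $\psi_2^{(k)}$ literally ``escapes to infinity''; the actual mechanism (Lemma~\ref{density boundedness lemma} and Corollary~\ref{L_2 times rho corollary}) is that the thin momentum shell forces a uniformly small one--particle density for $\psi_2^{(k)}$, so its overlap with the fixed compact set supporting $\phi_q$ is $O(\delta_q^{1/2})$ --- and this quantitative control is exactly what secures the $k$--uniform lower bound on $\|P^DP^E(\psi_1\otimes\psi_2^{(k)})\|$ in Lemma~\ref{permutation projection lower bound lemma}, the technical heart you rightly flagged.
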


\begin{remark}\label{weakening of lower assumption remark}
We only need Assumption~\ref{lower assumption} for the operators $\widetilde{\mathcal{H}}_{Z, j}^{D_j, E_j}$ which appear in \eqref{E}, \eqref{bottom}.
\end{remark}

\section{Commutator Estimates}\label{facts}

\subsection{One Particle Commutator Estimate}

\begin{lemma}\label{commutator lemma} 
Let $\chi\in C^2(\overline{\mathbb{R}^3})$ (i. e. a twice--differentiable function with bounded derivatives). Then for $m_n> 0$ the norm of the operator 
\[
[\chi, \Lambda_n]:L_2(\mathbb{R}^3, \mathbb{C}^4)\rightarrow H^1(\mathbb{R}^3, \mathbb{C}^4)
\]
satisfies
\begin{equation}\label{commutator}
\big\|[\chi, \Lambda_n]\big\|_{L_2(\mathbb{R}^3, \mathbb{C}^4)\rightarrow
H^1(\mathbb{R}^3, \mathbb{C}^4)}\leqslant C(m_n)\big(\|\nabla\chi\|_{L_\infty}+ \|\partial^2\chi\|_{L_\infty}\big). 
\end{equation}
Here $\|\partial^2\chi\|_{L_\infty}= \underset{\substack{\mathbf{z}\in\mathbb{R}^3\\ k,l\in\{1, 2,
    3\}}}{\max}\big|\partial_{kl}^2\chi(\mathbf{z})\big|$.
\end{lemma}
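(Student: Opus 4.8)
\emph{Proof idea.}
The plan is to realise $[\chi,\Lambda_n]$ as an integral operator and to bound the kernels of $[\chi,\Lambda_n]$ and of $\nabla[\chi,\Lambda_n]$ by Schur's test, peeling off the one borderline--singular piece, which is instead controlled by $L_2$--boundedness of a zero--order Fourier multiplier. Using $\alpha_k\alpha_l+ \alpha_l\alpha_k= 2\delta_{kl}$, $\alpha_k\beta+ \beta\alpha_k= 0$, $\beta^2= 1$, one has $D_n^2= -\Delta+ m_n^2$, so by \eqref{Lambda_m} the operator $\Lambda_n- \tfrac12$ has Fourier symbol $\lambda(\mathbf p):= \tfrac12(\al\cdot\mathbf p+ \beta m_n)(\mathbf p^2+ m_n^2)^{-1/2}$ (which is bounded, with $\nabla_p\lambda(\mathbf p)= O(|\mathbf p|^{-1})$) and acts by convolution with the matrix kernel $\ell_n:= \tfrac12\big(-i\al\cdot\nabla+ \beta m_n\big)G_n$, where $G_n$ is the scalar convolution kernel of $(-\Delta+ m_n^2)^{-1/2}$ on $\mathbb{R}^3$: it is real--analytic away from the origin, $G_n(\x)= O(|\x|^{-2})$ as $\x\to 0$, and $G_n$ together with all of its derivatives decays exponentially at infinity. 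Hence $\ell_n\in C^\infty(\mathbb{R}^3\setminus\{0\})$, $\big|\partial^\gamma\ell_n(\x)\big|\le C(m_n,\gamma)|\x|^{-3-|\gamma|}$ for $|\x|\le 1$, and $\ell_n$ and its derivatives decay exponentially. Since $[\chi,\Lambda_n]= [\chi,\Lambda_n- \tfrac12]$ acts (initially on $\varphi$ with $\chi\varphi\in L_2$) by
\[
\big([\chi,\Lambda_n]\varphi\big)(\x)= \int_{\mathbb{R}^3}\ell_n(\x- \y)\big(\chi(\x)- \chi(\y)\big)\varphi(\y)\,d\y ,
\]
its kernel is $K(\x,\y)= \ell_n(\x- \y)\big(\chi(\x)- \chi(\y)\big)$, and only $\nabla\chi$ will enter, via $|\chi(\x)- \chi(\y)|\le \|\nabla\chi\|_{L_\infty}|\x- \y|$.

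For the $L_2$--bound this inequality and the estimates on $\ell_n$ give $|K(\x,\y)|\le C(m_n)\|\nabla\chi\|_{L_\infty}\,\varrho(|\x- \y|)$, where $\varrho$ behaves like $r^{-2}$ near $0$ and decays exponentially, so $\varrho(|\cdot|)\in L_1(\mathbb{R}^3)$ and Schur's test yields $\|[\chi,\Lambda_n]\|_{L_2\to L_2}\le C(m_n)\|\nabla\chi\|_{L_\infty}$. For the $H^1$--bound I would differentiate,
\[
\partial_{x_k}K(\x,\y)= (\partial_k\ell_n)(\x- \y)\big(\chi(\x)- \chi(\y)\big)+ \ell_n(\x- \y)\,\partial_k\chi(\x) .
\]
The second term is the kernel of $\varphi\mapsto(\partial_k\chi)\big(\Lambda_n- \tfrac12\big)\varphi$ (multiplication by the bounded matrix $\partial_k\chi(\x)$ after the bounded operator $\Lambda_n- \tfrac12$), hence contributes $\le\tfrac12\|\nabla\chi\|_{L_\infty}$ to the $L_2\to L_2$ norm. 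In the first term I would insert the Taylor expansion $\chi(\y)= \chi(\x)+ \nabla\chi(\x)\cdot(\y- \x)+ r(\x,\y)$ with $|r(\x,\y)|\le\tfrac12\|\partial^2\chi\|_{L_\infty}|\x- \y|^2$. The remainder $r$ yields a kernel $\le C(m_n)\|\partial^2\chi\|_{L_\infty}\,\widetilde\varrho(|\x- \y|)$ with $\widetilde\varrho(|\cdot|)\in L_1(\mathbb{R}^3)$ (near the diagonal $|\x- \y|^{-4}\cdot|\x- \y|^2= |\x- \y|^{-2}$), again handled by Schur's test; the linear term yields, for each $k$, the operator $\sum_j(\partial_j\chi)\,S_{kj}$, where $S_{kj}$ is the convolution by $\mathbf z\mapsto z_j\,(\partial_k\ell_n)(\mathbf z)$, i.e. the Fourier multiplier with matrix symbol $-\partial_{p_j}\big(p_k\,\lambda(\mathbf p)\big)$. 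Since $\lambda$ is bounded with $\nabla_p\lambda= O(|\mathbf p|^{-1})$, $-\partial_{p_j}(p_k\lambda(\mathbf p))$ is bounded, so by Plancherel $\|S_{kj}\|_{L_2\to L_2}\le C(m_n)$ and $\big\|\sum_j(\partial_j\chi)\,S_{kj}\big\|_{L_2\to L_2}\le C(m_n)\|\nabla\chi\|_{L_\infty}$. Collecting the three contributions bounds $\|\nabla[\chi,\Lambda_n]\|_{L_2\to L_2}$ by $C(m_n)\big(\|\nabla\chi\|_{L_\infty}+ \|\partial^2\chi\|_{L_\infty}\big)$, which with the $L_2$--bound proves \eqref{commutator}.

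The main obstacle I expect is the term $(\nabla\ell_n)(\x- \y)\big(\chi(\x)- \chi(\y)\big)$: its size $|\x- \y|^{-3}$ lies exactly at the threshold of local integrability in $\mathbb{R}^3$, so a naive Schur estimate fails. The point of the splitting above is that the first--order Taylor term produces a Calder\'on--Zygmund--type operator, bounded on $L_2$ because, up to the harmless factor $\nabla\chi$, it is of order zero, while the genuinely integrable remainder is precisely what brings $\|\partial^2\chi\|_{L_\infty}$ into the bound. Throughout, only $\nabla\chi$ and $\partial^2\chi$ are used, never $\chi$ itself, which is why one works with the commutator rather than with $\Lambda_n$ alone.
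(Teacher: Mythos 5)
Your proof is correct and follows essentially the same coordinate--space strategy as the paper, which writes the kernel of $\Lambda_n$ explicitly in terms of the modified Bessel functions $K_0, K_1$ (matching your asymptotics $\ell_n(\x)\sim|\x|^{-3}$ near the origin with exponential decay) and then defers the commutator estimate to Lemma~1 of \cite{MorozovVugalter2006}. The one genuinely delicate step in your self--contained version --- controlling the borderline piece $(\partial_k\ell_n)(\x-\y)\big(\chi(\x)-\chi(\y)\big)$ by the Taylor split into a Schur--integrable remainder and a zero--order Fourier multiplier with bounded symbol $-\partial_{p_j}\big(p_k\lambda(\mathbf p)\big)$ --- is exactly the needed cancellation and is carried out correctly.
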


\begin{proof}{}
In the coordinate representation for $f\in C_0^1(\mathbb{R}^3, \mathbb{C}^4)$ the operator $\Lambda_n$ acts as
\begin{equation*}\begin{split}\label{Lambda_n in configuration space}
&(\Lambda_nf)(\x)= \frac{f(\x)}{2}+ \frac{im_n}{2\pi^2}\underset{\varepsilon\rightarrow +0}{\lim}\int\limits_{|\y- \x|\geqslant \varepsilon}\frac{\al\cdot(\x- \y)}{\rxy^3}K_1\big(m_n\rxy\big)f(\y)d\y\\ &+ \frac{m_n^2}{4\pi^2}\int\limits_{\mathbb{R}^3}\bigg(\beta\frac{K_1\big(m_n|\x- \y|\big)}{|\x- \y|}+ \frac{i\al\cdot(\mathbf{x}- \mathbf{y})}{|\x- \y|^2}K_0\big(m_n\rxy\big)\bigg)f(\y)d\y,
\end{split}\end{equation*}
where the limit on the \rhs is the limit in $L_2(\mathbb{R}^3, \mathbb{C}^4)$ (see Appendix~B of \cite{MorozovVugalter2006}, where this formula is derived in the case $m_n= 1$). The rest of the proof is an obvious modification of the proof of Lemma~1 of \cite{MorozovVugalter2006}, where the case $m_n= 1$ is considered.
\end{proof}

\begin{remark}\label{no mass dependence}
Since we only deal with a finite number of particles with positive masses, we will not trace the $m$-dependence of the constant in \eqref{commutator} any longer.
\end{remark}

\subsection{Many Particle Commutator Estimate}

\begin{lemma}\label{multiplicator lemma}
For any $d, k\in\mathbb{N}$ there exists $C> 0$ such that for any $\chi\in C^1(\overline{\mathbb{R}^d})$ and $u\in H^{1/2}(\mathbb{R}^d, \mathbb{C}^k)$
\begin{equation}\label{multiplier}
\|\chi u\|_{H^{1/2}(\mathbb{R}^d, \mathbb{C}^k)}\leqslant C\big(\|\chi\|_{L_\infty(\mathbb{R}^d)}+ \|\nabla\chi\|_{L_\infty(\mathbb{R}^d)}\big)\|u\|_{H^{1/2}(\mathbb{R}^d, \mathbb{C}^k)}.
\end{equation}
\end{lemma}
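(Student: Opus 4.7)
The plan is to deduce \eqref{multiplier} from endpoint estimates on $L_2$ and $H^1$ by interpolation, which is the fastest route in this setting. Write $M_\chi$ for the operator of multiplication by $\chi$. At the $L_2$ end one has the trivial bound $\|M_\chi\|_{L_2\to L_2}\leqslant \|\chi\|_{L_\infty}$. At the $H^1$ end, for $\chi\in C^1(\overline{\mathbb{R}^d})$ with bounded gradient the distributional Leibniz rule gives $\nabla(\chi u)=(\nabla\chi)u+\chi\nabla u$, hence
\[
\|M_\chi\|_{H^1\to H^1}\leqslant C\big(\|\chi\|_{L_\infty}+\|\nabla\chi\|_{L_\infty}\big).
\]

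Since $H^{1/2}(\mathbb{R}^d,\mathbb{C}^k)$ is the complex interpolation space $[L_2(\mathbb{R}^d,\mathbb{C}^k),H^1(\mathbb{R}^d,\mathbb{C}^k)]_{1/2}$ (this is the standard identification of the Bessel potential scale as a complex interpolation scale), the interpolation theorem for linear operators applied to $M_\chi$ yields
\[
\|M_\chi\|_{H^{1/2}\to H^{1/2}}\leqslant \|M_\chi\|_{L_2\to L_2}^{1/2}\|M_\chi\|_{H^1\to H^1}^{1/2}\leqslant C\big(\|\chi\|_{L_\infty}+\|\nabla\chi\|_{L_\infty}\big),
\]
which is exactly \eqref{multiplier}.

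An elementary alternative avoiding interpolation is to use the Slobodeckij characterization $\|u\|_{H^{1/2}}^2\sim \|u\|_{L_2}^2+\iint|u(x)-u(y)|^2|x-y|^{-d-1}\,dx\,dy$ together with the pointwise splitting $\chi(x)u(x)-\chi(y)u(y)=\chi(x)(u(x)-u(y))+u(y)(\chi(x)-\chi(y))$. The first piece is bounded by $\|\chi\|_{L_\infty}$ times the seminorm of $u$; the second is treated by separating $|x-y|\leqslant 1$ (where the mean value theorem gives $|\chi(x)-\chi(y)|\leqslant \|\nabla\chi\|_{L_\infty}|x-y|$, producing the integrable kernel $|x-y|^{1-d}$ after cancellation) from $|x-y|>1$ (where $|\chi(x)-\chi(y)|\leqslant 2\|\chi\|_{L_\infty}$ and $|x-y|^{-d-1}$ is integrable at infinity). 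Each contribution is then $\leqslant C(\|\chi\|_{L_\infty}^2+\|\nabla\chi\|_{L_\infty}^2)\|u\|_{L_2}^2$. No serious obstacle is anticipated on either route: both reduce the claim to standard facts. The only real point of care is the density argument justifying the product rule for a general $\chi$ with merely bounded (not compactly supported) gradient, which is routine via approximation of $u$ by Schwartz functions.
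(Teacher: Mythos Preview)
Your interpolation argument is correct and gives the result cleanly. The paper, however, takes precisely your ``elementary alternative'': it fixes the Slobodeckij norm
\[
\|u\|_{H^{1/2}}^2=\|u\|_{L_2}^2+\iint\frac{|u(\x)-u(\y)|^2}{|\x-\y|^{d+1}}\,d\x\,d\y,
\]
uses the splitting $\chi(\x)u(\x)-\chi(\y)u(\y)=\chi(\x)\big(u(\x)-u(\y)\big)+\big(\chi(\x)-\chi(\y)\big)u(\y)$, and bounds the second contribution via
\[
\sup_{\y}\int\frac{|\chi(\x)-\chi(\y)|^2}{|\x-\y|^{d+1}}\,d\x\leqslant |\mathbb{S}^{d-1}|\big(\|\nabla\chi\|_{L_\infty}^2+4\|\chi\|_{L_\infty}^2\big)
\]
by separating $|\x-\y|\leqslant 1$ from $|\x-\y|>1$, exactly as you outline. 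Your main route is shorter if one is willing to invoke $[L_2,H^1]_{1/2}=H^{1/2}$ as a black box, and it yields the analogous multiplier bound on every $H^s$ with $0\leqslant s\leqslant 1$ at no extra cost; the paper's direct computation is entirely self-contained and avoids any appeal to interpolation theory. Neither argument has a gap.
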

\begin{proof}{ of Lemma~\ref{multiplicator lemma}}
We choose the norm in $H^{1/2}(\mathbb{R}^d, \mathbb{C}^k)$ as (see \cite{Adams1975}, Theorem 7.48).
\begin{equation*}\label{the norm}
\|u\|_{H^{1/2}(\mathbb{R}^d, \mathbb{C}^k)}^2:= \|u\|_{L_2(\mathbb{R}^d, \mathbb{C}^k)}^2+ \iint\frac{\big|u(\x)- u(\y)\big|^2}{|\x- \y|^{d+ 1}}d\x d\y.
\end{equation*}
Then
\begin{equation}\label{chain}\begin{split}
\|\chi u\|_{H^{1/2}(\mathbb{R}^d, \mathbb{C}^k)}^2&= \|\chi u\|_{L_2(\mathbb{R}^d, \mathbb{C}^k)}^2+ \iint\frac{\big|\chi(\x)u(\x)- \chi(\y)u(\y)\big|^2}{|\x- \y|^{d+ 1}}d\x d\y\\ \leqslant \|\chi\|_{L_\infty}^2\|u\|_{L_2}^2&+ \iint\bigg(\frac{\big|\chi(\x)\big|^2\big|u(\x)- u(\y)\big|^2}{|\x- \y|^{d+ 1}}+ \frac{\big|\chi(\x)- \chi(\y)\big|^2\big|u(\y)\big|^2}{|\x- \y|^{d+ 1}}\bigg)d\x d\y\\ &\leqslant \|\chi\|_{L_\infty}^2\|u\|_{H^{1/2}}^2+ \underset{\y\in\mathbb{R}^d}{\sup}\int\frac{\big|\chi(\x)- \chi(\y)\big|^2}{|\x- \y|^{d+ 1}}d\x\|u\|_{L_2}^2.
\end{split}\end{equation}
The supremum on the \rhs of \eqref{chain} can be estimated as
\begin{equation}\label{estimate for integral}\begin{split}
&\underset{\y\in\mathbb{R}^d}{\sup}\int\frac{\big|\chi(\x)- \chi(\y)\big|^2}{|\x- \y|^{d+ 1}}d\x\leqslant \underset{\y\in\mathbb{R}^d}{\sup}\int_{|\x- \y|\leqslant 1}\frac{\big|\chi(\x)- \chi(\y)\big|^2}{|\x- \y|^{d+ 1}}d\x\\ &+ \underset{\y\in\mathbb{R}^d}{\sup}\int_{|\x- \y|> 1}\frac{\big|\chi(\x)- \chi(\y)\big|^2}{|\x- \y|^{d+ 1}}d\x\leqslant |\mathbb S^{d- 1}|\big(\|\nabla\chi\|_{L_\infty}^2+ 4\|\chi\|_{L_\infty}^2\big),
\end{split}\end{equation}
where $|\mathbb S^{d- 1}|$ is the area of $(d- 1)$--dimensional unit sphere. Substituting \eqref{estimate for integral} into \eqref{chain} we obtain \eqref{multiplier}.
\end{proof}

\begin{lemma}\label{boundedness of many-particle commutator in H^1/2}
For any $\chi\in C^2(\overline{\mathbb{R}^{3N}})$ the operator $[\chi, \Lambda^N]$ is bounded in\\ $H^{1/2}(\mathbb{R}^{3N}, \mathbb{C}^{4^N})$, and for any $\psi\in H^{1/2}(\mathbb{R}^{3N}, \mathbb{C}^{4^N})$ we have
\begin{equation}\label{jjj}\begin{split}
&\big\|[\chi, \Lambda^N]\psi\big\|_{H^{1/2}}\leqslant C\big(\|\nabla\chi\|_{L_\infty}+ \|\partial^2\chi\|_{L_\infty}\big)\big(\|\chi\|_{L_\infty}+ \|\nabla\chi\|_{L_\infty}\big)\|\psi\|_{H^{1/2}}
\end{split}\end{equation}
with $C$ depending only on $N$ and the masses of the particles.
\end{lemma}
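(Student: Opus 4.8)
\emph{Proof proposal.} The plan is to strip the projectors off one particle at a time and then control a single commutator $[\chi,\Lambda_n]$, one block of variables at a time. Since the $\Lambda_k$ act on different particles they commute, which gives the telescoping identity
\[
[\chi,\Lambda^N]=\sum_{n=1}^{N}\Big(\prod_{k<n}\Lambda_k\Big)\,[\chi,\Lambda_n]\,\Big(\prod_{k>n}\Lambda_k\Big).
\]
Each $\Lambda_k$ is an orthogonal projection which, regarded as a Fourier multiplier in the $k$-th variable, has symbol of operator norm $\leqslant1$ and therefore commutes with the scalar multiplier $(1-\Delta_{\mathbb R^{3N}})^{1/4}$; hence $\|\Lambda_k\|_{H^{1/2}\to H^{1/2}}\leqslant1$. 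Thus it suffices to prove $\big\|[\chi,\Lambda_n]v\big\|_{H^{1/2}}\leqslant C(\dots)\|v\|_{H^{1/2}}$ for each fixed $n$, the summation over $n$ contributing only the $N$-dependent constant.

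For a single commutator I would use the equivalent norm $\|w\|_{H^{1/2}}^2\asymp\sum_{k=1}^{N}\big\|(1-\Delta_k)^{1/4}w\big\|_{L_2}^2$ --- the Fourier-side comparison $(1+|\xi|^2)^{1/2}\asymp\sum_k(1+|\xi_k|^2)^{1/2}$ on $(\mathbb R^3)^N$ --- and estimate the $k$-th summand separately. For a.e.\ fixed value of the remaining variables the map $\x_n\mapsto\chi(\dots,\x_n,\dots)$ lies in $C^2(\overline{\mathbb R^3})$ with gradient and Hessian bounded by $\|\nabla\chi\|_{L_\infty}$ and $\|\partial^2\chi\|_{L_\infty}$; applying Lemma~\ref{commutator lemma} in the $n$-th block and integrating the square over the remaining variables shows that $[\chi,\Lambda_n]$ gains a full derivative there:
\[
\big\|(1-\Delta_n)^{1/2}[\chi,\Lambda_n]w\big\|_{L_2}\leqslant C\big(\|\nabla\chi\|_{L_\infty}+\|\partial^2\chi\|_{L_\infty}\big)\|w\|_{L_2}.
\]
Since $(1+|\xi_n|^2)^{1/4}\leqslant(1+|\xi_n|^2)^{1/2}$, this already bounds the $k=n$ summand of $\|[\chi,\Lambda_n]v\|_{H^{1/2}}^2$ and the $L_2$-part, both by $C\big(\|\nabla\chi\|_{L_\infty}+\|\partial^2\chi\|_{L_\infty}\big)\|v\|_{L_2}$.

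The summands with $k\neq n$ carry the actual difficulty: $J_k:=(1-\Delta_k)^{1/4}$ commutes with $\Lambda_n$ but not with multiplication by $\chi$, and $\chi$ couples all the blocks. Here I would use the rearrangement
\[
J_k[\chi,\Lambda_n]v=[\chi,\Lambda_n]\,J_kv+\big[\,[J_k,\chi],\Lambda_n\,\big]v .
\]
The first term is controlled by the derivative gain above (only $[\chi,\Lambda_n]:L_2\to L_2$ is needed) by $C(\|\nabla\chi\|_{L_\infty}+\|\partial^2\chi\|_{L_\infty})\|J_kv\|_{L_2}\leqslant C(\dots)\|v\|_{H^{1/2}}$. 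For the second term one bounds the double commutator $[[J_k,\chi],\Lambda_n]$ on $L_2(\mathbb R^{3N})$: the inner commutator satisfies $\|[J_k,\chi]w\|_{L_2}\leqslant C(\|\chi\|_{L_\infty}+\|\nabla\chi\|_{L_\infty})\|J_kw\|_{L_2}$ --- estimate $\chi J_kw$ through $\|\chi\|_{L_\infty}$ and $J_k(\chi w)$ through the $H^{1/2}$-multiplier estimate of Lemma~\ref{multiplicator lemma} applied fibrewise in the $k$-th block --- and commuting once more with $\Lambda_n$, whose effect is again a smoothing in $\x_n$ acting on the $\x_n$-dependence that $\chi$ carries inside $[J_k,\chi]$ (Lemma~\ref{commutator lemma}), supplies the factor $\|\nabla\chi\|_{L_\infty}+\|\partial^2\chi\|_{L_\infty}$. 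Collecting the $N$ blocks and summing over $n$ then gives \eqref{jjj}, with $C$ depending only on $N$ (the mass dependence coming from Lemma~\ref{commutator lemma} is not tracked, by Remark~\ref{no mass dependence}).

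I expect the genuinely delicate point to be precisely this last step: the smoothing furnished by Lemma~\ref{commutator lemma} lives in the single variable $\x_n$, whereas $H^{1/2}(\mathbb R^{3N})$ is not a tensor product of one-particle spaces, so one must verify that transporting the fractional derivatives $J_k$, $k\neq n$, past $\chi$ and $\Lambda_n$ does not ruin that smoothing. The identity $J_k[\chi,\Lambda_n]=[\chi,\Lambda_n]J_k+[[J_k,\chi],\Lambda_n]$ isolates exactly the offending double commutator, and making its norm quantitative --- by combining $\nabla\chi$, $\partial^2\chi$ and the multiplier bound of Lemma~\ref{multiplicator lemma}, which is how the product structure on the right-hand side of \eqref{jjj} arises --- is the one ingredient beyond bookkeeping. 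The telescoping, the contractivity of the $\Lambda_k$ on $H^{1/2}$, and the fibrewise use of the one-particle estimate are routine.
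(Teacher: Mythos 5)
This is precisely the paper's route: the telescoping identity \eqref{reduction to one particle}, the $H^{1/2}$-boundedness of each $\Lambda_k$, and then the combination of Lemmata~\ref{commutator lemma} and~\ref{multiplicator lemma}. The paper compresses that last combination into a single sentence; your anisotropic norm decomposition $\|w\|_{H^{1/2}}^2\asymp\sum_k\|(1-\Delta_k)^{1/4}w\|_{L_2}^2$ and the rearrangement $J_k[\chi,\Lambda_n]=[\chi,\Lambda_n]J_k+[[J_k,\chi],\Lambda_n]$ are exactly the kind of bookkeeping the authors leave to the reader, and you have correctly located the one non-routine point (transporting the fractional derivatives $J_k$, $k\neq n$, past $\chi$ and $\Lambda_n$).
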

\begin{proof}{}
Successively commuting $\chi$ with $\Lambda_n, \, n= 1, \dots, N$ (see \eqref{Lambda^N}) we obtain
\begin{equation}\label{reduction to one particle}
[\chi, \Lambda^N]= \sum_{n= 1}^N\prod_{k= 1}^{n- 1}\Lambda_k[\chi, \Lambda_n]\prod_{l= n+ 1}^N\Lambda_l,
\end{equation}
where the empty products should be replaced by identity operators.
By \eqref{Lambda_m} the operators $\Lambda_n$ are bounded in $H^{1/2}$ for any $n= 1, \dots, N$. This, together with \eqref{reduction to one particle}, and Lemmata~\ref{commutator lemma} and \ref{multiplicator lemma}, implies \eqref{jjj}. 
\end{proof}

\section{Lower Bound of the Essential Spectrum}\label{hard part}

In this section we prove that 
\begin{equation}\label{very hard claim}
\inf\sigma_{\mathrm{ess}}(\mathcal{H}_N^{D, E})\geqslant \varkappa(D, E).
\end{equation}

\subsection{Partition of Unity}

\begin{lemma}\label{partition lemma}
There exists a set of nonnegative functions $\{\chi_Z\}$ indexed by possible $2$--cluster decompositions $Z= (Z_1, Z_2)$ satisfying
\begin{eqnarray}
&1.& \chi_Z\in C^\infty(\mathbb{R}^{3N})\ \textrm{for all}\ Z;\notag\label{smooth cond}\\
&2.& \chi_Z(\kappa\X) = \chi_Z(\X)\ \textrm{for all}\ |\X| = 1,\ \kappa > 1,\ Z_2\neq \varnothing;\notag\label{homogen cond}\\
&3.& \sum_Z\chi_Z^2(\X)= 1,\ \textrm{for all}\ \X\in\mathbb{R}^{3N};\label{sum cond}\\
&4.& \begin{aligned}&\textrm{There exists}\ C > 0\ \textrm{such that for any}\ \X\in\textrm{\emph{supp}}~\chi_Z\\
&\min \big\{|\x_j - \x_n|: \x_j\in Z_1,\ \x_n\in Z_2;\ |\x_n|: \x_n\in Z_2\big\}> C|\X|;\label{separation cond}\end{aligned}\\
&5.& \begin{aligned}&\chi_Z(\gamma\x_1, \dots, \gamma\x_N)= \chi_Z(\x_1, \dots, \x_N)\ \textrm{for any orthogonal}\\ 
&\textrm{transformation}\ \gamma;\notag\label{orthoinvariance cond}\end{aligned}\\
&6.& \chi_Z \ \textrm{is invariant under permutations of variables preserving}\ Z_{1, 2}.\notag\label{permutation cond}
\end{eqnarray}
\end{lemma}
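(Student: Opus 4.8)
The construction is a standard Ruelle--Simon--type partition of unity adapted to the two--cluster geometry, with the extra requirements of rotational and permutational invariance. The plan is as follows. First I would work on the unit sphere $\mathbb{S}^{3N-1}\subset\mathbb{R}^{3N}$. For each decomposition $Z=(Z_1,Z_2)$ with $Z_2\neq\varnothing$ consider the set
\[
\mathcal{O}_Z:=\big\{\X\in\mathbb{R}^{3N}: |\X|=1,\ \min\{|\x_j-\x_n|:j\# n\}\ \text{and}\ \min\{|\x_n|:n\in Z_2\}\ \text{are ``large relative to''}\ Z\big\}.
\]
More precisely, one checks the elementary geometric fact that the open sets
\[
\mathcal{O}_Z:=\Big\{\X:|\X|=1,\ \min\big\{|\x_j-\x_n|:j\#n\big\}>c_0,\ \min\big\{|\x_n|:n\in Z_2\big\}>c_0\Big\}
\]
for a suitably small universal constant $c_0=c_0(N)>0$ form an open cover of $\mathbb{S}^{3N-1}$: given any $\X$ with $|\X|=1$, declare $n\in Z_2(\X)$ if $|\x_n|$ is small and $n\in Z_1(\X)$ otherwise, then use a pigeonhole/graph--connectivity argument on the ``small distance'' relation to see that for an appropriate choice of the threshold defining ``small'' the resulting $\X$ lies in the corresponding $\mathcal{O}_Z$. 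The case $Z_2=\varnothing$ (i.e. all particles near the origin) is handled by a single function supported in a ball, which absorbs the neighbourhood of the origin where the conical picture breaks down.

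Next I would extract a smooth partition of unity subordinate to this cover. Pick $\widetilde\chi_Z\in C^\infty(\mathbb{S}^{3N-1})$, $\widetilde\chi_Z\geqslant 0$, $\supp\widetilde\chi_Z\subset\mathcal{O}_Z$, with $\sum_{Z:Z_2\neq\varnothing}\widetilde\chi_Z^2>0$ on the sphere; this is possible by compactness and the usual smooth Urysohn construction. To enforce conditions 5 and 6 I average over the (compact) group generated by the orthogonal group $\mathrm{O}(3)$ acting diagonally on the $N$ copies of $\mathbb{R}^3$ and the finite symmetry group permuting the particle labels within a fixed decomposition; since both the sphere, the sets $\mathcal{O}_Z$, and the collection of decompositions are invariant under these actions (orthogonal maps preserve all distances $|\x_j-\x_n|$ and all $|\x_n|$, and label permutations preserving $Z_{1,2}$ permute the $\mathcal{O}_Z$ trivially), the averaged functions are still smooth, nonnegative, supported in $\mathcal{O}_Z$, and now invariant. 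Then extend each $\widetilde\chi_Z$ to $\mathbb{R}^{3N}\setminus\{0\}$ by $0$--homogeneity, $\chi_Z(\X):=\widetilde\chi_Z(\X/|\X|)$ for $|\X|>0$, which gives condition 2. Define the origin--centred cutoff $\chi_{Z_0}$ (the $Z_2=\varnothing$ term) to be a radial $C^\infty$ function equal to $1$ near $0$ and supported in a small ball, smooth everywhere; replace each $\chi_Z$ ($Z_2\neq\varnothing$) by $\chi_Z\cdot(1-\theta)$ for a radial cutoff $\theta$ equal to $1$ near the origin, so that the previously--only--$0$--homogeneous functions are now genuinely $C^\infty$ on all of $\mathbb{R}^{3N}$ (they vanish in a neighbourhood of the singular point). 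Finally normalise: set $\chi_Z:=\Psi_Z\big/\big(\sum_{Z'}\Psi_{Z'}^2\big)^{1/2}$ where $\Psi_Z$ are the functions just built; the denominator is smooth, strictly positive, $0$--homogeneous outside a ball, orthogonally and permutationally invariant, so all six properties are inherited, with property 3 now exact and property 4 following from $0$--homogeneity plus the explicit lower bound $c_0$ on the relevant distances on $\supp\widetilde\chi_Z\subset\mathcal{O}_Z$ (rescaled: on $\supp\chi_Z$ one gets $>c_0|\X|$, which is condition 4 with $C=c_0$).

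The main obstacle is the verification that the sets $\mathcal{O}_Z$ genuinely cover the sphere for a single choice of threshold constant $c_0$ valid for all of $\mathbb{R}^{3N}$ — i.e. the combinatorial geometry lemma that any configuration with $|\X|=1$ can be assigned to \emph{some} two--cluster decomposition in which intercluster distances and distances of the second cluster to the origin are all bounded below by a universal constant. This is where one must be slightly careful: one iteratively groups particles that are mutually close and close to the origin, and the key point is that because $|\X|=1$ there is at least one particle at distance $\gtrsim N^{-1/2}$ from the origin, which can serve as a seed for $Z_2$; the threshold must be chosen small enough (depending only on $N$) that the ``large/small'' dichotomy is consistent along the finitely many grouping steps. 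Everything else — smoothness, the averaging, the homogenisation near the origin, the normalisation — is routine. I would also remark that conditions 5 and 6 together with the averaging are exactly what makes $\chi_Z$ commute (as multiplication operators) with $O_\gamma$ and with the relevant permutations, which is what the later localisation argument needs.
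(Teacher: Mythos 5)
Your proposal is correct in outline but follows a genuinely different route from the paper's own proof, so let me compare. For the covering fact on the unit sphere you sketch the classical Ruelle--Simon graph--connectivity argument (iteratively absorb particles close to the origin or to already--absorbed particles, using geometrically decreasing thresholds; the seed for $Z_2$ is a particle at distance $\gtrsim N^{-1/2}$). The paper instead chooses the particle $\x_k$ farthest from the origin, slices $\mathbb{R}^3$ along the direction of $\x_k$ into $N$ slabs of thickness $|\x_k|/N$, and takes $Z_2$ to be the particles beyond the highest empty slab; this gives the explicit constant $N^{-3/2}$ in one step. Both arguments work; the slicing argument is more concrete and gives the constant immediately, while yours is the more familiar one from the Schr\"odinger literature. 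For the construction of the functions the paper does not build a generic subordinate partition of unity and then symmetrize: it writes $\zeta_Z$ \emph{directly} as a product of cutoffs of the rotation-- and permutation--invariant quantities $|\x_j-\x_n|/|\X|$ ($j\in Z_1,\, n\in Z_2$), $|\x_n|/|\X|$ ($n\in Z_2$), and $|\X|$ itself, then normalizes by $(\sum_Z\zeta_Z)^{1/2}$. Because the $\zeta_Z$ are built from invariants, conditions 5 and 6 are automatic, and moreover $\zeta_{\pi(Z)}\circ\pi=\zeta_Z$ for every permutation $\pi$, so the normalizer $\sum_Z\zeta_Z$ is manifestly invariant.

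This last point is where your averaging step is under--specified. Averaging each $\widetilde\chi_Z$ only over the stabilizer of the fixed decomposition $Z$ (diagonal $O(3)$ together with permutations preserving $Z_{1,2}$) makes each $\Psi_Z$ individually invariant, but it does \emph{not} by itself make the normalizing denominator $\sum_{Z'}\Psi_{Z'}^2$ invariant under a permutation $\pi$ that preserves some fixed $Z$ yet permutes the other decompositions $Z'$ among themselves. For the quotient $\Psi_Z/(\sum_{Z'}\Psi_{Z'}^2)^{1/2}$ to inherit condition 6 you additionally need the equivariance $\Psi_{\pi(Z')}\circ\pi=\Psi_{Z'}$ across all decompositions, which is a stronger coherence requirement than stabilizer invariance. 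This is fixable (e.g.\ choose one representative $\widetilde\chi_Z$ per $S_N$--orbit of decompositions, transport it by conjugation, then symmetrize within stabilizers), but as written the proposal has a gap here that the paper's construction avoids for free.
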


\begin{proof}{}
The proof is essentially based on the modification of the argument given in \cite{Simon1977CMP}, Lemma~2.4.

\paragraph{1.}
We first prove that for any $\X= (\x_1, \dots, \x_N)\in \mathbb{R}^{3N}$ with $|\X|= 1$ there exists a $2$--cluster decomposition $Z= (Z_1, Z_2)$ such that
\begin{equation}\label{good property}
\min \big\{|\x_j - \x_n|: \x_j\in Z_1,\ \x_n\in Z_2;\ |\x_n|: \x_n\in Z_2\big\}> N^{-3/2}.
\end{equation}
Indeed, let $k$ be such that $|\x_k|\geqslant |\x_j|$ for all $j= 1, \dots, N$. Then, since $|\X|= 1$,
\begin{equation}\label{notless}
|\x_k|\geqslant N^{-\frac{1}{2}}.
\end{equation}
Choose Cartesian coordinates in $\mathbb{R}^3$ with the first axis passing through the origin and $\x_k$, so that $\x_k= \big(|\x_k|, 0, 0\big)$. Consider $N$ regions
\begin{equation*}\begin{split}\label{slices}
R_1&:= \big\{\x\in \mathbb{R}^3: x^1\leqslant |\x_k|/N\big\},\\
R_l&:= \Big\{\x\in \mathbb{R}^3: x^1\in \big((l-1)|\x_k|/N,\ l|\x_k|/N\big]\Big\},\quad l= 2, \dots, N.
\end{split}\end{equation*}
At least one of these regions does not contain $\x_j$ with $j\neq k$. Let $l_0$ be the maximal index of such regions. Let $Z_2$ be the set of indices $n$ such that $\x_n\in\underset{l> l_0}{\cup}R_l$. $Z_2$ is nonempty since $\x_k\in Z_2$. Setting $Z_1:= I\setminus Z_2$ we observe that
\begin{equation*}
\min \big\{|\x_j - \x_n|: \x_j\in Z_1,\ \x_n\in Z_2;\ |\x_n|: \x_n\in Z_2\big\}> |\x_k|/N,
\end{equation*}
which together with \eqref{notless} implies \eqref{good property}.

\paragraph{2.}
Choose $\eta\in C^\infty(\mathbb R_+)$ so that
\begin{equation*}
\eta(t)\equiv \begin{cases}0, & t\in [0, 1]\\ 1, & t\in [2, \infty).\end{cases}
\end{equation*}
Let
\begin{equation}\label{zeta}
\zeta_Z(\X):= \begin{cases}\displaystyle\prod_{\substack{j\in Z_1\\ n\in Z_2}}\eta\bigg(\frac{2|\x_j- \x_n|}{|\X|N^{-3/2}}\bigg)\prod_{n\in Z_2}\eta\bigg(\frac{2|\x_n|}{|\X|N^{-3/2}}\bigg)\eta\big(2|\X|\big), & Z_2\neq \varnothing\\ 1- \eta\big(2|\X|\big), & Z_2= \varnothing.\end{cases}
\end{equation}
Functions \eqref{zeta} satisfy conditions 1, 2, 4 (with $C= N^{-3/2}$), 5, and 6 of Lemma~\ref{partition lemma}. Moreover, by the first part of the proof
\begin{equation*}
\sum_Z\zeta_Z(\X)\geqslant 1, \quad \textrm{for all}\quad \X\in \mathbb R^{3N}.
\end{equation*}
Hence all the conditions are satisfied by the functions
\begin{equation*}
\chi_Z:= \zeta_Z^{1/2}\Big(\sum_Z\zeta_Z\Big)^{-1/2}.
\end{equation*}
\end{proof}

Let
\begin{equation}\label{chi_Z^R}
\chi_Z^R(\X):= \chi_Z(\X/R),
\end{equation}
where the functions $\chi_Z$ are defined in Lemma~\ref{partition lemma}.
The derivatives of $\chi_Z^R$ decay as $R$ tends to infinity:
\begin{equation}\label{derivatives decay}
\|\nabla\chi_Z^R\|_\infty\leqslant CR^{-1}, \quad \|\partial^2\chi_Z^R\|_\infty\leqslant CR^{-2}.
\end{equation}
To simplify the notation we omit the superscript $R$ further on.

\subsection{Cluster Decomposition and Lower Bound}

We now estimate from below the quadratic form of $\mathcal{H}_N^{D, E}$ on a function $\psi$ from $\mathfrak{H}_N^{D, E}\cap\Lambda^N\underset{n= 1}{\overset{N}{\otimes}}H^{1/2}(\mathbb{R}^3, \mathbb{C}^4)$, which is the form domain of $\mathcal{H}_N^{D, E}$.
\begin{equation*}\label{expansion 1}\begin{split}
\langle\mathcal{H}_N^{D, E}\psi, \psi\rangle&= \langle\bigg(\sum_{n= 1}^N(D_n+ V_n)+ \sum_{n<j}^NU_{nj}\bigg)\sum_Z\chi_Z^2\psi, \psi\rangle\\ &= \sum_Z\langle\bigg(\sum_{n= 1}^N(D_n+ V_n)+ \sum_{n<j}^NU_{nj}\bigg)\chi_Z\psi, \chi_Z\psi\rangle.
\end{split}\end{equation*}
Here we have used \eqref{sum cond} and the relation
\begin{equation}\label{first order commutation}
\sum_Z\langle f, \sum_{n= 1}^N\nabla_n(\chi_Z^2 g)\rangle= \sum_Z\langle\chi_Z f, \sum_{n= 1}^N\nabla_n(\chi_Z g)\rangle+ \sum_Z\langle f, \sum_{n= 1}^N\nabla_n\Big(\frac{\chi_Z^2}{2}\Big)g\rangle
\end{equation}
which holds for any $f, g\in \underset{n= 1}{\overset{N}{\otimes}}H^{1/2}(\mathbb{R}^3, \mathbb{C}^4)$. The last term on the \rhs of \eqref{first order commutation} is equal to zero due to \eqref{sum cond}. Thus
\begin{equation}\label{expansion 2}\begin{split}
\langle\mathcal{H}_N^{D, E}\psi, \psi\rangle&= \sum_{Z= (Z_1, Z_2)}\bigg(\langle(\mathcal{H}_{Z, 1}+ \mathcal{H}_{Z, 2})\Lambda^N\chi_Z\psi, \Lambda^N\chi_Z\psi\rangle\\&+ \langle(\mathcal{H}_{Z, 1}+ \mathcal{H}_{Z, 2})[\chi_Z, \Lambda^N]\psi, \Lambda^N\chi_Z\psi\rangle\\&+ \langle(\mathcal{H}_{Z, 1}+ \mathcal{H}_{Z, 2})\chi_Z\psi, [\chi_Z, \Lambda^N]\psi\rangle\\&+ \langle\sum_{n\in Z_2}V_n\chi_Z^2\psi, \psi\rangle+ \langle\sum_{\substack{n< j\\ n\# j}}U_{nj}\chi_Z^2\psi, \psi\rangle\bigg).
\end{split}\end{equation}
The terms at the last line of \eqref{expansion 2} can be estimated as
\begin{equation}\label{last line terms}
\langle\sum_{n\in Z_2}V_n\chi_Z^2\psi, \psi\rangle+ \langle\sum_{\substack{n< j\\ n\# j}}U_{nj}\chi_Z^2\psi, \psi\rangle\geqslant -\varepsilon_1(R)\big(\langle\mathcal{H}_N^{D, E}\psi, \psi\rangle+ \|\psi\|^2\big)
\end{equation}
with $\varepsilon_1(R)\to 0$ as $R\to \infty$, due to \eqref{V_n decay}, \eqref{U_nj decay}, \eqref{separation cond}, \eqref{chi_Z^R}, and \eqref{backward equivalence}. The terms at the second and third lines of \eqref{expansion 2} can also be estimated as
\begin{equation*}\label{2,3 line terms}\begin{split}
\langle(\mathcal{H}_{Z, 1}+ \mathcal{H}_{Z, 2})[\chi_Z, \Lambda^N]\psi, \Lambda^N\chi_Z\psi\rangle+ \langle(\mathcal{H}_{Z, 1}+ \mathcal{H}_{Z, 2})\chi_Z\psi, [\chi_Z, \Lambda^N]\psi\rangle\\ \geqslant -\varepsilon_2(R)\big(\langle\mathcal{H}_N^{D, E}\psi, \psi\rangle+ \|\psi\|^2\big), \quad \varepsilon_2(R)\underset{R\to \infty}{\longrightarrow} 0,
\end{split}\end{equation*}
due to \eqref{upper estimate}, \eqref{multiplier}, \eqref{jjj}, \eqref{derivatives decay}, and \eqref{backward equivalence}.
In the case $Z_2\neq \varnothing$ we estimate the terms at the first line of \eqref{expansion 2} in the following way (recall the definitions of $\varkappa_j(Z, D_j, E_j)$ and $\varkappa(D, E)$ given in \eqref{E} and \eqref{E(D)}):
\begin{equation}\label{main terms}\begin{split}
&\langle(\mathcal{H}_{Z, 1}+ \mathcal{H}_{Z, 2})\Lambda^N\chi_Z\psi, \Lambda^N\chi_Z\psi\rangle\\& = \sum_{(D_1, E_1; D_2, E_2)\prec (D, E)}\langle(\mathcal{H}_{Z, 1}P^{D_1}P^{E_1}+ \mathcal{H}_{Z, 2}P^{D_2}P^{E_2})\Lambda^N\chi_Z\psi, \Lambda^N\chi_Z\psi\rangle\\ &\geqslant \sum_{(D_1, E_1; D_2, E_2)\prec (D, E)}\langle\big(\varkappa_1(Z, D_1, E_1)P^{D_1}P^{E_1}\\ &+ \varkappa_2(Z, D_2, E_2)P^{D_2}P^{E_2}\big)\Lambda^N\chi_Z\psi, \Lambda^N\chi_Z\psi\rangle\\ &\geqslant \varkappa(D, E)\langle\Lambda^N\chi_Z\psi, \Lambda^N\chi_Z\psi\rangle\\ &= \varkappa(D, E)\langle\chi_Z^2\psi, \psi\rangle+ \varkappa(D, E)\langle[\Lambda^N, \chi_Z]\psi, \chi_Z\psi\rangle\\ &+ \varkappa(D, E)\langle\Lambda^N\chi_Z\psi, [\Lambda^N, \chi_Z]\psi\rangle.
\end{split}\end{equation}
By \eqref{multiplier}, \eqref{jjj}, \eqref{derivatives decay}, and \eqref{backward equivalence} the last two terms on the \rhs of \eqref{main terms} can be estimated as
\begin{equation}\label{so}\begin{split}
\varkappa(D, E)\langle[\Lambda^N, \chi_Z]\psi, \chi_Z\psi\rangle+ \varkappa(D, E)\langle\Lambda^N\chi_Z\psi, [\Lambda^N, \chi_Z]\psi\rangle\\ \geqslant -\varepsilon_3(R)\big(\langle\mathcal{H}_N^{D, E}\psi, \psi\rangle+ \|\psi\|^2\big),\quad \varepsilon_3(R)\underset{R\to \infty}{\longrightarrow}0.
\end{split}\end{equation}
Substituting the estimates \eqref{last line terms} --- \eqref{so} into \eqref{expansion 2} we obtain
\begin{equation}\begin{split}\label{out of compact}
\langle\mathcal{H}_N^{D, E}\psi, \psi\rangle&\geqslant \varkappa(D, E)\langle\sum_{\substack{Z= (Z_1, Z_2)\\ Z_2\neq \varnothing}}\chi_Z^2\psi, \psi\rangle+ \langle\mathcal{H}_N^{D, E}\Lambda^N\chi_{(I, \varnothing)}\psi, \Lambda^N\chi_{(I, \varnothing)}\psi\rangle\\ &- \varepsilon_4(R)\big(\langle\mathcal{H}_N^{D, E}\psi, \psi\rangle+ \|\psi\|^2\big), \quad \varepsilon_4(R)\underset{R\to \infty}{\longrightarrow}0.
\end{split}\end{equation}

\subsection{Estimate Inside of the Compact Region}

It remains to estimate from below the quadratic form of the operator $\mathcal{H}_N^{D, E}$ on the function $\Lambda^N\chi_{(I, \varnothing)}\psi$. Note that according to Lemma~\ref{partition lemma} and \eqref{chi_Z^R} $\supp\chi_{(I, \varnothing)}\subset [-2R, 2R]^{3N}$. To simplify the notation let 
\begin{equation*}\label{chi_0}
\chi_0:= \chi_{(I, \varnothing)}.
\end{equation*}

\begin{lemma}\label{lemma2}
For $M> 0$ let
\[
W_M:=\big\{\mathbf{p}\in\mathbb{R}^{3N}: |p_i|\leqslant M, i=1,\dots,3N\big\}, \:\widetilde W_M:=\mathbb{R}^{3N}\setminus W_M.
\]
There exists a finite set $Q_M\subset L_2(\mathbb{R}^{3N})$ such that for any $f\in L_2(\mathbb{R}^{3N})$ with $\mathrm{supp}f\subset[-2R, 2R]^{3N},\: f\bot Q_M$ holds
\begin{equation*}\label{much is away}
\|\hat f\|_{L_2(\widetilde W_M)}\geqslant \frac{1}{2}\|\hat f\|_{L_2(\mathbb{R}^{3N})}. 
\end{equation*}
\end{lemma}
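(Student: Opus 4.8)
The plan is to prove Lemma~\ref{lemma2} by a compactness argument on the unit ball of the space of $L_2$-functions supported in the cube $[-2R,2R]^{3N}$. The key point is that for such functions the Fourier transform $\hat f$ is an entire function of exponential type, and the map $f\mapsto \hat f|_{W_M}$ is, in a suitable sense, compact.

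First I would consider the set
\[
\mathcal{B}_R:=\big\{f\in L_2(\mathbb{R}^{3N}):\supp f\subset[-2R,2R]^{3N},\ \|f\|_{L_2}\leqslant 1\big\}.
\]
By the Paley--Wiener theorem every $\hat f$ with $f\in\mathcal B_R$ extends to an entire function, and on any fixed bounded region $W_M$ one has uniform bounds on $\hat f$ and on $\nabla\hat f$ (the latter because multiplication by the bounded coordinate functions $x_i$, $|x_i|\leqslant 2R$, is a bounded operation on $\mathcal B_R$, and $\widehat{x_i f}=i\partial_{p_i}\hat f$). Hence $\{\hat f|_{W_M}:f\in\mathcal B_R\}$ is bounded and equicontinuous on the compact set $W_M$, so by Arzel\`a--Ascoli it is precompact in $C(W_M)$, and therefore also precompact in $L_2(W_M)$. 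Equivalently, the operator $T_M:L_2([-2R,2R]^{3N})\to L_2(W_M)$, $T_M f:=\hat f|_{W_M}$, is compact.

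Next I would argue by contradiction. Suppose the conclusion fails for the given $M$: then for every finite set $Q\subset L_2(\mathbb{R}^{3N})$ there exists $f$ with $\supp f\subset[-2R,2R]^{3N}$, $f\perp Q$, and $\|\hat f\|_{L_2(\widetilde W_M)}<\tfrac12\|\hat f\|_{L_2}$, i.e. $\|\hat f\|_{L_2(W_M)}>\tfrac{\sqrt3}{2}\|\hat f\|_{L_2}$ (using the Plancherel identity $\|\hat f\|_{L_2(\mathbb{R}^{3N})}=\|f\|_{L_2}$, which holds because $\hat\cdot$ is unitary). Normalising, one produces a sequence $f_k\in\mathcal B_R$ with $\|f_k\|_{L_2}=1$, $\|T_M f_k\|_{L_2(W_M)}\to$ something $\geqslant\tfrac{\sqrt3}{2}$, and $f_k\rightharpoonup 0$ weakly (the orthogonality to larger and larger finite sets forces weak convergence to $0$ along a subsequence, taking $Q$ to exhaust an orthonormal basis of $L_2([-2R,2R]^{3N})$). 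But $T_M$ is compact, so $T_M f_k\to T_M 0=0$ strongly in $L_2(W_M)$, contradicting $\|T_M f_k\|\geqslant\tfrac{\sqrt3}{2}>0$. Hence for the given $M$ there is a finite-dimensional subspace whose orthogonal complement (intersected with functions supported in the cube) consists of functions with at least half their $L_2$-mass, in Fourier variables, outside $W_M$; choosing $Q_M$ to be an orthonormal basis of that subspace finishes the proof.

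The main obstacle is making the compactness of $T_M$ precise: one has to be careful that, although functions in $\mathcal B_R$ are supported in a fixed compact cube, the relevant compactness is of the \emph{Fourier transforms restricted to $W_M$}, and the cleanest route is the equicontinuity estimate on $\hat f$ over $W_M$ via the uniform bound $\|\partial_{p_i}\hat f\|_{L_\infty(W_M)}\leqslant\|x_i f\|_{L_1}\leqslant 2R\,(4R)^{3N/2}\|f\|_{L_2}$ (Cauchy--Schwarz on the cube), together with Arzel\`a--Ascoli and the finiteness of $|W_M|$. A minor secondary point is organising the contradiction argument so that the finite sets $Q$ can be chosen to be initial segments of a fixed orthonormal basis, which is what yields weak convergence to zero; this is routine once one fixes such a basis of $L_2([-2R,2R]^{3N})$.
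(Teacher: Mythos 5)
Your proof is correct. Note that the paper does not actually give a proof of this lemma: it simply refers to the proof of Theorem~7 of Vugalter--Weidl and to Appendix~C of Morozov--Vugalter, so there is no in-text argument to compare against. Your argument is self-contained and sound: the Paley--Wiener/Cauchy--Schwarz bounds on $\hat f$ and $\nabla\hat f$ over the compact box $W_M$ give equicontinuity, Arzel\`a--Ascoli then yields compactness of $T_M f=\hat f\arrowvert_{W_M}$ from $L_2([-2R,2R]^{3N})$ to $L_2(W_M)$, and the contradiction argument (orthogonality to initial segments of a fixed orthonormal basis forces $f_k\rightharpoonup 0$, while compactness of $T_M$ forces $T_Mf_k\to0$ strongly, contradicting the lower bound $\|T_Mf_k\|>\tfrac{\sqrt3}{2}$) is carried out correctly, the constant $\tfrac{\sqrt3}{2}$ coming from Plancherel as you say. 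A slightly shorter route to the same compactness, worth noting, is to observe that $T_M=I_{W_M}\mathcal F\,I_{[-2R,2R]^{3N}}$ has the explicit $L_2$ kernel $(2\pi)^{-3N/2}I_{W_M}(\mathbf p)e^{-i\mathbf p\cdot\mathbf x}I_{[-2R,2R]^{3N}}(\mathbf x)$, which is square-integrable because both boxes have finite measure; hence $T_M$ is Hilbert--Schmidt. One can then even bypass the contradiction argument by diagonalising the compact self-adjoint operator $T_M^*T_M$ and taking $Q_M$ to be the (finitely many) eigenfunctions with eigenvalue exceeding $3/4$, which makes the required finite set completely explicit. Either way, your argument is complete and closes the gap left by the paper's citation.
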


The proof of Lemma~\ref{lemma2} is analogous to the proof of Theorem~7 of \cite{VugalterWeidl2003} and is given in Appendix~C of \cite{MorozovVugalter2006}.

\medskip

It follows from \eqref{assumption 1} that for any $M> 0$
\begin{equation}\begin{split}\label{step1}
\langle \mathcal{H}_N^{D, E}\Lambda^N\chi_0\psi, \Lambda^N\chi_0\psi\rangle\geqslant C_1\langle\underset{n= 1}{\overset{N}{\sum}}D_n I_{\widetilde W_M}\Lambda^N\chi_0\psi, \Lambda^N\chi_0\psi\rangle- C_2\|\chi_0\psi\|^2.
\end{split}\end{equation}
Here $I_{\widetilde W_M}$ is the operator of multiplication by the characteristic function of $\widetilde W_M$ in momentum space.

We choose
\begin{equation}\label{M}
M:= 8\big(\varkappa(D, E)+ C_2\big)C_1^{-1}
\end{equation}
and assume henceforth that $f:= \chi_0\psi$ is orthogonal to the set $Q_M$ defined in Lemma~\ref{lemma2}. Since in momentum space the operator $D_n$ acts on functions from $\Lambda_nL_2(\mathbb{R}^3, \mathbb{C}^4)$ as multiplication by $\sqrt{|\mathbf{p}|^2+ m_n^2}$, by construction of $\widetilde W_M$ we have
\begin{equation}\label{big kinetic energy}
\langle\underset{n= 1}{\overset{N}{\sum}}D_n I_{\widetilde W_M}\Lambda^N\chi_0\psi, \Lambda^N\chi_0\psi\rangle\geqslant M\|I_{\widetilde W_M}\Lambda^N\chi_0\psi\|^2.
\end{equation}
Inequalities \eqref{step1} and \eqref{big kinetic energy} imply
\begin{equation}\begin{split}\label{step2}
&\langle\mathcal{H}_N^{D, E}\Lambda^N\chi_0\psi, \Lambda^N\chi_0\psi\rangle\geqslant C_1M\|I_{\widetilde W_M}\Lambda^N\chi_0\psi\|^2- C_2\|\chi_0\psi\|^2\\ &\geqslant C_1M\Big(\|I_{\widetilde W_M}\chi_0\psi\|- \big\|I_{\widetilde W_M}[\Lambda^N, \chi_0]\psi\big\|\Big)^2- C_2\|\chi_0\psi\|^2\\ &\geqslant C_1M\Big(\frac{1}{2}\|I_{\widetilde W_M}\chi_0\psi\|^2- \big\|I_{\widetilde W_M}[\Lambda^N, \chi_0]\psi\big\|^2\Big)- C_2\|\chi_0\psi\|^2\\ &\geqslant 4\big(\varkappa(D, E)+ C_2\big)\|I_{\widetilde W_M}\chi_0\psi\|^2\\ &- 8\big(\varkappa(D, E)+ C_2\big)\big\|[\Lambda^N, \chi_0]\psi\big\|^2- C_2\|\chi_0\psi\|^2.
\end{split}\end{equation}
At the last step we have used \eqref{M}. 
The second term on the \rhs of \eqref{step2} can be estimated analogously to \eqref{so} as
\begin{equation*}
- 8\big(\varkappa(D, E)+ C_2\big)\big\|[\Lambda^N, \chi_0]\psi\big\|^2\geqslant -\varepsilon_5(R)\big(\langle\mathcal{H}_N^{D, E}\psi, \psi\rangle+ \|\psi\|^2\big),\ \varepsilon_5(R)\underset{R\to \infty}{\longrightarrow}0.
\end{equation*}
For the first term on the \rhs of \eqref{step2} Lemma~\ref{lemma2} implies
\begin{equation}\label{big away}
4\|I_{\widetilde W_M}\chi_0\psi\|^2\geqslant \|\chi_0\psi\|^2.
\end{equation}
As a consequence of \eqref{step2} --- \eqref{big away}, we have
\begin{equation}\label{another half}\begin{split}
\langle\mathcal{H}_N^{D, E}\Lambda^N\chi_0\psi, \Lambda^N\chi_0\psi\rangle&\geqslant \varkappa(D, E)\|\chi_0\psi\|^2- \varepsilon_5(R)\big(\langle\mathcal{H}_N^{D, E}\psi, \psi\rangle+ \|\psi\|^2\big),\\ &\varepsilon_5(R)\underset{R\to \infty}{\longrightarrow}0.
\end{split}\end{equation}

\subsection{Completion of the Proof}\label{completion}

By \eqref{out of compact}, \eqref{another half}, and \eqref{sum cond}
\begin{equation*}\label{final}
\langle \mathcal{H}_N^{D, E}\psi, \psi\rangle\geqslant \varkappa(D, E)\|\psi\|^2- \varepsilon_6(R)\big(\langle\mathcal{H}_N^{D, E}\psi, \psi\rangle+ \|\psi\|^2\big),\quad \varepsilon_6(R)\underset{R\to \infty}{\longrightarrow}0.
\end{equation*}
for any $\psi$ in the form domain of $\mathcal{H}_N^{D, E}$ orthogonal to the finite set of functions (cardinality of this set depends on $R$). This implies the discreteness of the spectrum of $\mathcal{H}_N^{D, E}$ below $\varkappa(D, E)$ and thus \eqref{very hard claim}.

\section{Spectrum of the Free Cluster}\label{easy part}

In this section we characterize the spectrum of the cluster $Z_2$ which does not interact with the external field.

\begin{proposition}\label{reduction for H^D,E}
For any irreducible representations $D_2, E_2$ of rotation--reflection and permutation groups the spectrum of $\widetilde{\mathcal{H}}_{Z, 2}^{D_2, E_2}$ is
\begin{equation*}\label{free subsystem spectrum with symmetries}
\sigma(\widetilde{\mathcal{H}}_{Z, 2}^{D_2, E_2})= \sigma_{\textrm{\emph{ess}}}(\widetilde{\mathcal{H}}_{Z, 2}^{D_2, E_2})= \big[\varkappa_2(Z, D_2, E_2), \infty\big),
\end{equation*}
with some $\varkappa_2(Z, D_2, E_2)\in\mathbb R$.
\end{proposition}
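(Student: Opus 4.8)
The plan is to show that $\widetilde{\mathcal{H}}_{Z,2}^{D_2,E_2}$ is a translation-invariant operator in disguise, so that its spectrum is a half-line $[\varkappa_2,\infty)$ with no discrete part. The cluster $Z_2$ feels no external field: by \eqref{H_Z_2} and \eqref{reduced spaces}, $\widetilde{\mathcal{H}}_{Z,2}=\Lambda_{Z,2}\big(\sum_{n\in Z_2}D_n+\sum_{n<j,\,n,j\in Z_2}U_{nj}\big)\Lambda_{Z,2}$, and every term commutes with the simultaneous translation $\x_n\mapsto\x_n+\a$ of all particles of the cluster (the free Dirac operators $D_n$ are translation-invariant, the projectors $\Lambda_n$ are Fourier multipliers hence translation-invariant, and $U_{nj}$ depends only on $\x_n-\x_j$). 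The symmetry projectors $P^{D_2}P^{E_2}$ also commute with this translation. Thus the total momentum $\P_{Z_2}:=\sum_{n\in Z_2}\mathbf{p}_n$ of the cluster is conserved.

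The first step is therefore to pass to coordinates that split off the center of mass of the cluster: a Jacobi-type change of variables writing $\x_n$ ($n\in Z_2$) as the cluster center-of-mass coordinate $\mathbf{R}$ together with $3(N_2-1)$ internal coordinates. As stressed in the introduction, the kinetic energy does \emph{not} decompose as $A\otimes I+I\otimes B$ here — one cannot separate the center of mass the way one does for Schrödinger operators. The clean substitute, again following the idea used for $\mathcal{H}_N$, is to perform a partial Fourier transform only in $\mathbf{R}$, obtaining a direct integral $\widetilde{\mathcal{H}}_{Z,2}^{D_2,E_2}\cong\int^{\oplus}_{\mathbb{R}^3}h(\P)\,d\P$ over the total-momentum variable $\P$, where each fiber $h(\P)$ acts on the internal Hilbert space (with the appropriate symmetry reduction). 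By Assumptions \ref{upper assumption} and \ref{lower assumption} (which we only need for $\widetilde{\mathcal{H}}_{Z,2}^{D_2,E_2}$, cf.\ Remark~\ref{weakening of lower assumption remark}) each $h(\P)$ is a semibounded self-adjoint operator, and $\sigma(\widetilde{\mathcal{H}}_{Z,2}^{D_2,E_2})=\overline{\bigcup_{\P}\sigma(h(\P))}$.

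The second step is to understand the $\P$-dependence of $h(\P)$. Because the kinetic energy of the cluster is $\sum_{n\in Z_2}\sqrt{|\mathbf p_n|^2+m_n^2}$, which is continuous and unbounded in the momenta and invariant under rotating \emph{all} momenta simultaneously, one expects: (a) $\inf\sigma(h(\P))$ depends continuously on $\P$ and on $|\P|$ only by the rotation symmetry (so it suffices to vary $|\P|$); (b) $\inf\sigma(h(\P))\to\infty$ as $|\P|\to\infty$ (boosting the cluster costs kinetic energy, controlled from below by Assumption~\ref{lower assumption} together with \eqref{square root formula}); and (c) each $\sigma(h(\P))$ is itself a half-line $[\inf\sigma(h(\P)),\infty)$. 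Claim (c) is the key observation that makes the whole spectrum a half-line: one proves it by a Weyl-sequence/localization argument inside the fiber — translating trial functions in the internal coordinates far apart produces, by the decay Assumption~\ref{decay assumption} on the $U_{nj}$, approximate eigenfunctions at energy $\lambda+(\text{kinetic energy of a slow internal mode})$ for every $\lambda\geqslant\inf\sigma(h(\P))$, and the Chandrasekhar kinetic energy of a single slow relative coordinate fills $[0,\infty)$ continuously. Combining (a)--(c): $\sigma(\widetilde{\mathcal{H}}_{Z,2}^{D_2,E_2})=\overline{\bigcup_{\P}[\inf\sigma(h(\P)),\infty)}=[\inf_{\P}\inf\sigma(h(\P)),\infty)=:[\varkappa_2(Z,D_2,E_2),\infty)$, and since this set has no isolated points $\sigma_{\mathrm{ess}}$ equals all of it.

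The main obstacle is the fiber decomposition itself — making rigorous that a pseudorelativistic multiparticle operator with an interaction that is only locally $L_2$ and decaying in the weak sense of Assumption~\ref{decay assumption} (rather than a nice bounded or $L_\infty$-decaying potential), and after reduction by the rotation and permutation symmetries, genuinely decomposes as a measurable direct integral in the total momentum with self-adjoint fibers having the stated qualitative behavior. The nonlocality of the $\Lambda_n$'s complicates the separation of the center of mass (they do not act fiberwise in the naive internal variables), so one has to commute $\Lambda_{Z,2}$ through the partial Fourier transform carefully, exactly as the commutator estimates of Section~\ref{facts} are used for $\mathcal{H}_N$; step (b), the growth of the fiber infimum, likewise needs the equivalence of norms from Remark~\ref{equivalence remark} applied uniformly in $\P$. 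Once the direct-integral structure and these three qualitative facts are in place, the conclusion is immediate.
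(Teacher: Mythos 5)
Your overall architecture (direct integral in total momentum, then continuity + divergence of the fiber infimum) is the right shape and closely parallels the paper's, but two steps go wrong.

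First, the fibering. You decompose $\widetilde{\mathcal{H}}_{Z,2}^{D_2,E_2}$ as $\int^{\oplus}_{\mathbb{R}^3}h(\p)\,d\p$ over the vector total momentum $\p$ and then speak of "the appropriate symmetry reduction" on each fiber. But rotations do not fix $\p$: $O_\gamma$ maps the $\p$-fiber to the $\gamma\p$-fiber, so the projector $P^{D_2}=d_{D_2}\int_\Gamma\overline{\zeta_{D_2}(\gamma)}O_{\gamma,2}\,d\mu(\gamma)$ mixes the fibers and does \emph{not} reduce the direct integral over $\p\in\mathbb{R}^3$. This is precisely why the paper decomposes with respect to $\mathfrak{P}=|\p|\in\mathbb{R}_+$, keeping the angular variable $\omega=\p/|\p|\in S^2$ inside the fiber space $L_2(\mathbb{R}^{3(N_2-1)}\times S^2,\mathbb{C}^{4^{N_2}})$; the paper even flags this in the introduction ("commutation with the \emph{absolute value} of the total momentum"). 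Your remark that "it suffices to vary $|\P|$" correctly identifies the relevant parameter for the infimum, but it does not repair the decomposition itself: after applying $P^{D_2}$ there simply is no well-defined operator-valued map $\p\mapsto h(\p)$.

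Second, your claim (c) — that each fiber's spectrum is already the half-line $[\inf\sigma(h(\P)),\infty)$ — is both unnecessary and false in general. The fiber at fixed total momentum is the internal Hamiltonian of a free cluster with possibly attractive pair interactions; such an operator can perfectly well have isolated eigenvalues strictly below its essential spectrum (this is the generic situation for bound clusters, and indeed the reason the whole HVZ machinery exists). Your sketch for (c) — translating an internal coordinate far away to generate Weyl sequences — is itself an HVZ-type argument that one would have to prove from scratch, and it would only produce spectrum from some threshold on, not from $\inf\sigma(h(\P))$. The paper sidesteps (c) entirely: it only needs that $\mu(\mathfrak{P})=\inf\sigma(\widetilde{\mathcal{H}}_{Z,2}^{D_2,E_2,\mathfrak{P}})$ is continuous, bounded below, and $\to\infty$, so that the range of $\mu$ already fills the half-line by the intermediate value theorem, and then $\sigma(\widetilde{\mathcal{H}}_{Z,2}^{D_2,E_2})=\overline{\mathrm{ess}\bigcup_{\mathfrak{P}}\sigma(\cdot)}\supseteq\overline{\{\mu(\mathfrak{P})\}}=[\inf\mu,\infty)$, with the reverse inclusion trivial. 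So the half-line comes from varying $\mathfrak{P}$ over $\mathbb{R}_+$, not from the structure of any single fiber. Keep your steps (a) and (b), drop (c), and fiber over $|\p|$ rather than $\p$, and the argument aligns with the paper's.
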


\begin{proof}{}
Let us introduce the new coordinates in the configuration space $\mathbb{R}^{3N_2}$ of the cluster $Z_2= \{n_1, \dots, n_{N_2}\}$, in the same manner as it is done in \cite{LewisSiedentopVugalter1997}. Let $M:= \sum_{n\in Z_2}m_n$ be the total mass of the particles constituting the cluster. We introduce
\begin{equation}\label{new coordinates Z_2}\begin{split}
\y_0&:= \frac{1}{M}\sum_{n\in Z_2}m_n\x_n,\\
\y_k&:= \x_{n_{k+ 1}}- \x_{n_1}, \quad k= 1, \dots, N_2- 1.
\end{split}\end{equation}
The Jacobian of this variable change is one. $\y_0$ is the coordinate of the center of mass, whereby $\y_k$, $k= 1, \dots, N_2- 1$ are the internal coordinates of the cluster. 
Accordingly,
\begin{equation}\label{old coordinates}\begin{split}
\x_{n_1}&= \y_0- \frac1M\sum_{k= 1}^{N_2- 1}m_{n_{k+ 1}}\y_k,\\
\x_{n_{l+ 1}}&= \y_0+ \y_l- \frac1M\sum_{k= 1}^{N_2- 1}m_{n_{k+ 1}}\y_k, \quad l= 1, \dots, N_2- 1.
\end{split}\end{equation}
The momentum operators in the new coordinates are
\begin{equation}\label{momenta Z_2}\begin{split}
\mathbf p_{n_1}:= -i\nabla_{\x_{n_1}}&= \frac{m_{n_1}}{M}\p- \sum_{k= 1}^{N_2- 1}(-i\nabla_{\y_k}),\\ \mathbf p_{n_k}:= -i\nabla_{\x_{n_k}}&= \frac{m_{n_k}}{M}\p+ (-i\nabla_{\y_{k- 1}}), \quad k= 2, \dots, N_2,
\end{split}\end{equation}
where $\p$ is the total momentum of the cluster:
\begin{equation*}\label{new total momentum Z_2}
\p:= \sum_{n\in Z_2}-i\nabla_{\x_n}= -i\nabla_{\y_0}.
\end{equation*}
Let $\mathcal F_0$ be the partial Forurier transform on $\mathfrak{H}_{Z, 2}^{D_2, E_2}$ defined by
\begin{equation*}\label{F_0}
(\mathcal F_0f)(\p, \y_1, \dots, \y_{N_2- 1}):= \frac{1}{(2\pi)^{3/2}}\int_{\mathbb{R}^3} f(\y_0, \y_1, \dots, \y_{N_2- 1})e^{-i\p\y_0}d\y_0.
\end{equation*}
By \eqref{H_Z_2} we have
\begin{equation*}\label{operator for Z_2}
\widetilde{\mathcal{H}}_{Z, 2}^{D_2, E_2}= \mathcal F_0^{-1}\widehat\Lambda_{Z, 2}\widehat{\mathcal{H}}_{Z, 2}^{D_2, E_2}\widehat\Lambda_{Z, 2}\mathcal F_0,
\end{equation*}
where in the new coordinates
\begin{equation}\label{hat H on Z_2}
\widehat{\mathcal{H}}_{Z, 2}^{D_2, E_2}:= \sum_{n\in Z_2}(\al_n\cdot \mathbf{p}_n+ \beta_n m_n)+ \sum_{k= 2}^{N_2- 1}U_{n_1n_k}(\y_k)+ \sum_{1< k< l\leqslant N_2- 1}U_{n_kn_l}(\y_k- \y_l),
\end{equation}
\begin{equation}\label{Lambda_Z_2}
\widehat\Lambda_{Z, 2}:= \prod_{n\in Z_2}\widehat\Lambda_n,
\end{equation}
\begin{equation*}\label{Lambda p}
\widehat\Lambda_n:= \frac{1}{2}+ \frac{\al_n\cdot \mathbf{p}_n+ \beta_n m_n}{2\sqrt{\mathbf{p}_n^2+ m_n^2}},
\end{equation*}
$\mathbf p_n$ are givn by \eqref{momenta Z_2}, and $\p$ should now be interpreted as multiplication by the vector--function.
The operators \eqref{hat H on Z_2} and \eqref{Lambda_Z_2} obviously commute with $\mathfrak P:= |\p|$. The operator $\mathcal F_0^{-1}\mathfrak P \mathcal F_0$ (unlike $\mathcal F_0^{-1}\p\mathcal F_0$) is well--defined in $\mathfrak H_{Z, 2}^{D_2, E_2}$, since it commutes with $P^{D_2}$ an $P^{E_2}$ in $\mathfrak H_{Z, 2}$. This implies that $\widetilde{\mathcal{H}}_{Z, 2}^{D_2, E_2}$ commutes with $\mathcal F_0^{-1}\mathfrak P\mathcal F_0$.

Let $\omega:= \p/\mathfrak P\in S^2$.
We decompose the Hilbert space $\mathfrak{H}_{Z, 2}^{D_2, E_2}$ into the direct integral
\begin{equation}\label{direct decomposition-space Z_2}
\mathfrak{H}_{Z,2}^{D_2, E_2}= \int_0^\infty\oplus\mathfrak{H}_{Z, 2}^{D_2, E_2, \mathfrak{P}}\mathfrak P^2d\mathfrak{P}.
\end{equation}
The fibre space $\mathfrak{H}_{Z, 2}^{D_2, E_2, \mathfrak{P}}$ can be considered as a subspace of $L_2(\mathbb{R}^{3N_2- 3}\times S^2, \mathbb{C}^{4^{N_2}})$ with the inner product
\begin{equation*}\label{*-product}
\langle f,g\rangle_*:= \int_{\mathbb{R}^{3(N_2- 1)}\times S^2}\langle f,g\rangle_{\mathbb{C}^{4^{N_2}}}d\y_1\cdots d\y_{N_2- 1}d\omega.
\end{equation*}
For $f\in \mathfrak{H}_{Z, 2}^{D_2, E_2}$ the corresponding element of $\mathfrak{H}_{Z, 2}^{D_2, E_2, \mathfrak P}$ is given by
\begin{equation*}\label{reduction}
f_{\mathfrak P}:= \mathcal F_0f\arrowvert_{|\p|= \mathfrak P}.
\end{equation*}
We have
\begin{equation}\label{*-isometry}
\|f\|^2= \int_0^\infty\|f_{\mathfrak{P}}\|_*^2\mathfrak P^2d\mathfrak{P}
\end{equation}
in compliance with \eqref{direct decomposition-space Z_2}.
The form domain of $\widetilde{\mathcal{H}}_{Z, 2}^{D_2, E_2, \mathfrak{P}}$ is
\begin{equation*}
\mathfrak{D}^{\mathfrak{P}}:= \Lambda_{Z, 2}^{\mathfrak{P}}P^{D_2}P^{E_2}H^{1/2}(\mathbb{R}^{3(N_2- 1)}\times S^2, \mathbb{C}^{4^{N_2}}),
\end{equation*}
where $\Lambda_{Z, 2}^{\mathfrak{P}}$ is given by \eqref{Lambda_Z_2} with the only difference that we should replace $\p$ by $\omega\mathfrak P$ in \eqref{momenta Z_2}. The operators on fibres of the direct integral \eqref{direct decomposition-space Z_2} are
\begin{equation*}\label{operator for Z_2 on fibre}
\widetilde{\mathcal{H}}_{Z, 2}^{D_2, E_2, \mathfrak P}:= \Lambda_{Z, 2}^{\mathfrak{P}}\mathcal{H}_{Z, 2}^{D_2, E_2, \mathfrak{P}}\Lambda_{Z, 2}^{\mathfrak{P}},
\end{equation*}
where $\mathcal{H}_{Z, 2}^{D_2, E_2, \mathfrak{P}}$ is given by the \rhs of \eqref{hat H on Z_2} with $\p$ replaced by $\omega\mathfrak P$ in \eqref{momenta Z_2}. We thus have
\begin{equation}\label{direct integral decomposition}
\widetilde{\mathcal{H}}_{Z, 2}^{D_2, E_2}= \int_0^\infty\oplus\widetilde{\mathcal{H}}_{Z, 2}^{D_2, E_2, \mathfrak P}\mathfrak P^2d\mathfrak P.
\end{equation}

The spectrum of $\widetilde{\mathcal{H}}_{Z,2}^{D_2, E_2}$ can be represented as
\begin{equation}\label{spectrum decomposition Z_2}
\sigma(\widetilde{\mathcal{H}}_{Z, 2}^{D_2, E_2})= \overline{\textrm{ess}\bigcup_{\mathfrak{P}\in\mathbb R_+}\sigma(\widetilde{\mathcal{H}}_{Z, 2}^{D_2, E_2, \mathfrak{P}})},
\end{equation}
where the essential union is taken with respect to the Lebesgue measure in $\mathbb R_+$. The bottom of the spectrum of $\widetilde{\mathcal{H}}_{Z, 2}^{D_2, E_2, \mathfrak{P}}$ is given by
\begin{equation}\label{P-bottom Z_2}
\mu(\mathfrak{P}):= \underset{\psi\in \mathfrak{D}^{\mathfrak{P}}}{\inf}\frac{\langle\widetilde{\mathcal{H}}_{Z, 2}^{D_2, E_2, \mathfrak{P}}\psi, \psi\rangle_*}{\|\psi\|_*^2}.
\end{equation}

\begin{lemma}\label{mu continuity lemma 2}
Function \eqref{P-bottom Z_2} is continuous on $\mathbb R_+$.
\end{lemma}

\begin{proof}{ of Lemma~\ref{mu continuity lemma 2}}
Let us fix $\mathfrak{P}\in\mathbb{R}_+$ and $\varepsilon> 0$. We will prove that $\big|\mu(\mathfrak{P}+ \mathfrak{p})- \mu(\mathfrak{P})\big|< \varepsilon$ if $|\mathfrak{p}|$ is small enough. Choose $\psi\in \mathfrak{D}^{\mathfrak{P}}$ such that
\begin{equation}\label{close to mu 2}
\bigg|\frac{\langle\widetilde{\mathcal{H}}_{Z, 2}^{D_2, E_2, \mathfrak{P}}\psi, \psi\rangle_*}{\|\psi\|_*^2}- \mu(\mathfrak{P})\bigg|\leqslant \frac{\varepsilon}{2}.
\end{equation}
Let
\begin{equation*}\label{phi^P+p 2}
\phi:= \Lambda_{Z, 2}^{\mathfrak{P}+ \mathfrak{p}}\psi\in \mathfrak{D}^{\mathfrak{P}+ \mathfrak{p}}.
\end{equation*}
We have
\begin{equation}\label{phi-psi 2}
\phi- \psi= (\Lambda_{Z, 2}^{\mathfrak{P}+ \mathfrak{p}}- \Lambda_{Z, 2}^{\mathfrak{P}})\psi= \sum_{k= 1}^{N_2}\prod_{i< k}\Lambda_{n_i}^{\mathfrak{P}+ \mathfrak{p}}(\Lambda_{n_k}^{\mathfrak{P}+ \mathfrak{p}}- \Lambda_{n_k}^{\mathfrak{P}})\prod_{j> k}\Lambda_{n_j}^{\mathfrak{P}}\psi.
\end{equation}
Let $\mathcal{F}$ be the unitary Fourier transform in $L_2(\mathbb{R}^{3(N_2- 1)}\times S^2, \mathbb{C}^{4^{N_2}})$ defined by
\begin{equation*}\label{Fourier transform 2}\begin{split}
&(\mathcal{F}\xi)(\omega, \mathbf{q}_1, \dots, \mathbf{q}_{N_2- 1})\\ &:= (2\pi)^{3(1- N_2)/2}\int\limits_{\mathbb{R}^{3(N_2- 1)}}\xi(\omega, \y_1, \dots, \y_{N_2- 1})e^{-i\sum\limits_{k= 1}^{N_2-1}\mathbf{q}_k\cdot\y_k}d\y_1\cdots d\y_{N_2- 1}.
\end{split}\end{equation*}
We can rewrite \eqref{phi-psi 2} as
\begin{equation}\label{phi-psi Fourier 2}
\phi- \psi= \mathcal{F}^{-1}\sum_{k= 1}^{N_2}\prod_{i< k}\widehat\Lambda_{n_i}^{\mathfrak{P}+ \mathfrak{p}}(\widehat\Lambda_{n_k}^{\mathfrak{P}+ \mathfrak{p}}- \widehat\Lambda_{n_k}^{\mathfrak{P}})\prod_{j> k}\widehat\Lambda_{n_j}^{\mathfrak{P}}\mathcal{F}\psi,
\end{equation}
where $\widehat\Lambda_{n}^{\mathfrak{P}},\: n\in Z_2$ are the operators of multiplication by the symbols
\begin{equation}\label{Lambda^P Fourier 2}
\widehat\Lambda_n^{\mathfrak{P}}:= \frac{1}{2}+ \frac{\al_n\cdot \widehat{\mathbf{p}}_n+ \beta_n m_n}{2\sqrt{\widehat{\mathbf{p}}_n^2+ m_n^2}},
\end{equation}
\begin{equation}\label{p_n Fourier 2}\begin{split}
\widehat{\mathbf{p}}_{n_1}&:= \frac{m_{n_1}}{M}\omega\mathfrak{P}- \sum_{k= 1}^{N_2- 1}\mathbf{q}_k,\\ \widehat{\mathbf{p}}_{n_k}&:= \frac{m_{n_k}}{M}\omega\mathfrak{P}+ \mathbf{q}_{k- 1}, \quad k= 2, \dots, N_2.
\end{split}\end{equation}
The matrix--functions \eqref{Lambda^P Fourier 2} are uniformly continuous in $\mathfrak{P}$. Thus by \eqref{phi-psi Fourier 2}
\begin{equation}\label{closeness 2}
\|\phi- \psi\|_{H^{1/2}(\mathbb{R}^{3(N_2- 1)}\times S^2, \mathbb{C}^{4^{N_2}})}\leqslant C\sum_{k= 1}^{N_2}\|\Lambda_{n_k}^{\mathfrak{P}+ \mathfrak{p}}- \Lambda_{n_k}^{\mathfrak{P}}\|_{H^{1/2}\to H^{1/2}}\underset{|\mathfrak{p}|\to 0}{\longrightarrow}0.
\end{equation}
We write
\begin{equation}\label{star expansion 2}\begin{split}
&\langle\widetilde{\mathcal{H}}_{Z, 2}^{D_2, E_2, \mathfrak{P}+ \mathfrak{p}}\phi, \phi\rangle_*= \langle\widetilde{\mathcal{H}}_{Z, 2}^{D_2, E_2, \mathfrak{P}}\psi, \psi\rangle_*+  \langle\mathcal{H}_{Z, 2}^{D_2, E_2, \mathfrak{P}}(\phi- \psi), \psi\rangle_*\\ &+ \langle\mathcal{H}_{Z, 2}^{D_2, E_2, \mathfrak{P}}\phi, (\phi- \psi)\rangle_*+  \langle(\widetilde{\mathcal{H}}_{Z, 2}^{D_2, E_2, \mathfrak{P}+ \mathfrak{p}}- \mathcal{H}_{Z, 2}^{D_2, E_2, \mathfrak{P}})\phi, \phi\rangle_*.
\end{split}\end{equation}
The second and third terms on the \rhs of \eqref{star expansion 2} tend to zero as $|\mathfrak{p}|\to 0$ according to \eqref{closeness 2} and \eqref{upper estimate}. The last term also tends to zero for small $|\mathbf p|$, since the symbol of the difference is
\begin{equation*}\label{difference symbol 2}
\mathcal{F}(\widetilde{\mathcal{H}}_{Z, 2}^{D_2, E_2, \mathfrak{P}+ \mathfrak{p}}- \mathcal{H}_{Z, 2}^{D_2, E_2, \mathfrak{P}})\mathcal{F}^{-1}= \sum_{n\in Z_2}\frac{m_n}{M}\al_n\cdot\omega\mathfrak{p}.
\end{equation*}
From \eqref{closeness 2} and \eqref{star expansion 2} follows that
\begin{equation}\label{eps/2 2}
\bigg|\frac{\langle\widetilde{\mathcal{H}}_{Z, 2}^{D_2, E_2, \mathfrak{P}}\psi, \psi\rangle_*}{\|\psi\|_*^2}- \frac{\langle\widetilde{\mathcal{H}}_{Z, 2}^{D_2, E_2, \mathfrak{P}+ \mathfrak{p}}\phi, \phi\rangle_*}{\|\phi\|_*^2}\bigg|\leqslant \frac{\varepsilon}{2},
\end{equation}
if $|\mathfrak{p}|$ is small enough. Hence by \eqref{close to mu 2} and \eqref{eps/2 2} for any $\varepsilon> 0$
\[
\big|\mu(\mathfrak{P}+ \mathfrak{p})- \mu(\mathfrak{P})\big|< \varepsilon
\]
for $|\mathfrak p|$ small enough.
\end{proof}

Now we prove that $\mu$ is semibounded from below and tends to infinity as $|\mathfrak{P}|\to \infty$. This, together with \eqref{spectrum decomposition Z_2} and Lemma~\ref{mu continuity lemma 2}, implies that the spectrum of $\widetilde{\mathcal{H}}_{Z, 2}^{D_2, E_2}$ is purely essential and is concentrated on a semi--axis. Proposition~\ref{reduction for H^D,E} will be thus proved.

According to \eqref{assumption 1} for $j= 2$ and \eqref{square root formula} we have
\begin{equation}\label{with root 2}\begin{split}
\langle\widetilde{\mathcal{H}}_{Z, 2}^{D_2, E_2}\psi, \psi\rangle\geqslant C_1\langle\sum_{n\in Z_2}\sqrt{-\Delta_n + m_n^2}\psi, \psi\rangle- C_2\|\psi\|^2,\\ \textrm{for any}\quad \psi\in P^DP^E\underset{n\in Z_2}{\otimes}\Lambda_nH^{1/2}(\mathbb{R}^3, \mathbb{C}^4).
\end{split}\end{equation}
Since all the operators corresponding to the quadratic forms involved in \eqref{with root 2} commute with $\mathcal F_0^{-1}\mathfrak P\mathcal F_0$, it follows from \eqref{direct integral decomposition} that for almost all $\mathfrak{P}$
\begin{equation}\label{reduced assumption 2}
\langle\widetilde{\mathcal{H}}_{Z, 2}^{D_2, E_2, \mathfrak{P}}\psi, \psi\rangle_*\geqslant C_1\langle\sum_{n\in Z_2}\sqrt{\widehat{\mathbf{p}}_n^2 + m_n^2}\mathcal{F}\psi, \mathcal{F}\psi\rangle_*- C_2\|\psi\|_*^2
\end{equation}
holds for every $\psi\in \mathfrak{D}^{\mathfrak{P}}$, where $\widehat{\mathbf{p}}_n$ are defined in \eqref{p_n Fourier 2}. Thus $\mu$ is semibounded from below. Since by \eqref{p_n Fourier 2}
\begin{equation*}\label{P as sum 2}
\mathfrak{P}= \Big|\sum_{n\in Z_2}\widehat{\mathbf{p}}_n\Big|,
\end{equation*}
there exists $n\in Z_2$ such that 
\begin{equation*}\label{absolute value estimate 2}
|\widehat{\mathbf{p}}_n|\geqslant \frac{\mathfrak{P}}{N_2}
\end{equation*}
and hence
\begin{equation*}\label{to infinity 2}
\sum_{n\in Z_2}\sqrt{\widehat{\mathbf{p}}_n^2 + m_n^2}\geqslant \frac{\mathfrak{P}}{N_2}.
\end{equation*}
Thus the \rhs of \eqref{reduced assumption 2} tends to infinity as $\mathfrak{P}\to \infty$.
\end{proof}

\section{Absence of Gaps}

We are now ready to finish the proof of Theorem~\ref{HVZ theorem} by proving that
\begin{equation}\label{first inclusion}
\big[\varkappa(D, E), \infty\big)\subseteq \sigma(\mathcal{H}_N^{D, E}).
\end{equation}

Let us first fix a decomposition $Z$ on which the minimum is attained in \eqref{E(D)}.

Following the general strategy of \cite{JorgensWeidmann1973}, we will prove that for any irreducible representations $(D_1, E_1; D_2, E_2)\underset Z\prec (D, E)$ any
\begin{equation*}\label{lambda assumption}
\lambda\geqslant \varkappa_1(Z, D_1, E_1)+ \varkappa_2(Z, D_2, E_2)
\end{equation*}
belongs to $\sigma(\mathcal{H}_N^{D, E})$.
This will imply \eqref{first inclusion} according to the definition \eqref{bottom}.
Let
\begin{equation}\label{lambda_1}
\lambda_1:= \lambda- \varkappa_1(Z, D_1, E_1)\geqslant \varkappa_2(Z, D_2, E_2).
\end{equation}
We will use the notation and results of Section~\ref{easy part}.
The following lemma is a slight modification of Theorem~8.11 of \cite{JorgensWeidmann1973} and is proved along the same lines:

\begin{lemma}\label{generating subspace lemma}
Let $A$ be a selfadjoint operator in a Hilbert space $\mathfrak H$ and $U(\gamma)$ be a continuous representation of a compact group $\Gamma$ by unitary operators in $\mathfrak H$ such that $U(\gamma)\Dom A\subset \Dom A$ and $U(\gamma)A= AU(\gamma)$ for any $\gamma\in \Gamma$. Then for any irreducible (matrix) representation $D$ of $\Gamma$ the corresponding subspace $P^D\mathfrak H$ reduces $A$. For every $\lambda\in\sigma(A^D)$ where $A^D$ is the reduced operator and every $\varepsilon> 0$ there exists a $D$--generating subspace $G$ of $\Dom A$ such that
\begin{equation*}
\|Au- \lambda u\|\leqslant \varepsilon\|u\|,\ \textrm{for all}\ u\in G.
\end{equation*}
\end{lemma}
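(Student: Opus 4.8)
The plan is to combine Weyl's criterion for the reduced operator $A^D$ with the projection-valued ``matrix units'' attached to the representation $D$. First I would introduce, for $i,j\in\{1,\dots,d_D\}$, the operators $P^D_{ij}:=d_D\int_\Gamma\overline{D_{ij}(\gamma)}\,U(\gamma)\,d\mu(\gamma)$, which refine $P^D=\sum_iP^D_{ii}$, and record the identities that follow from the Schur orthogonality relations for the (unitary) matrix coefficients of $D$: $(P^D_{ij})^*=P^D_{ji}$, $\ P^D_{ij}P^D_{kl}=\delta_{jk}P^D_{il}$, $\ U(\gamma)P^D_{ij}=\sum_kD_{ki}(\gamma)P^D_{kj}$, with $\|P^D_{ij}\|\leqslant d_D$ and the $P^D_{ii}$ mutually orthogonal orthogonal projections. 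Since $A$ is closed, commutes with every $U(\gamma)$ on $\Dom A$, and $\gamma\mapsto U(\gamma)u$ is continuous, each $P^D_{ij}$ --- being a strong integral of the $U(\gamma)$ --- maps $\Dom A$ into itself and commutes with $A$ there; specialising this to $P^D=\sum_iP^D_{ii}$ gives the first assertion, that $P^D\mathfrak H$ reduces $A$.

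Given $\lambda\in\sigma(A^D)$ and $\varepsilon>0$, I would then apply Weyl's criterion to the selfadjoint operator $A^D$ in $P^D\mathfrak H$ to obtain $v\in\Dom A\cap P^D\mathfrak H$ with $\|v\|=1$ and $\|Av-\lambda v\|\leqslant\delta$, where $\delta>0$ is a small parameter fixed at the very end. The heart of the argument is to ``spread'' $v$ into a $D$-generating subspace. Since $v=P^Dv=\sum_iP^D_{ii}v$ with pairwise orthogonal summands, $\sum_i\|P^D_{ii}v\|^2=1$, so after relabelling the basis of the carrier space of $D$ I may assume $\|P^D_{11}v\|^2\geqslant d_D^{-1}$; I then set $u_k:=P^D_{k1}v$ for $k=1,\dots,d_D$ and $G:=\mathrm{span}\{u_1,\dots,u_{d_D}\}\subset\Dom A$.

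It remains to check that $G$ has the required properties. From the identities above, $\langle u_k,u_l\rangle=\langle v,P^D_{1k}P^D_{l1}v\rangle=\delta_{kl}\|P^D_{11}v\|^2$, so $\{u_k\}$ is an orthogonal basis of a genuinely $d_D$-dimensional $G$ with each $\|u_k\|=\|P^D_{11}v\|\geqslant d_D^{-1/2}$, while $U(\gamma)u_k=\sum_jD_{jk}(\gamma)u_j$ shows that $\Gamma$ acts on $G$ in the basis $(u_k)$ exactly by $D$, so that $G$ is indeed $D$-generating. Finally, for $u=\sum_kc_ku_k\in G$ the commutation of $P^D_{k1}$ with $A$ gives $Au-\lambda u=\sum_kc_kP^D_{k1}(Av-\lambda v)$, and with $\|P^D_{k1}\|\leqslant d_D$, $\sum_k|c_k|\leqslant d_D^{1/2}\big(\sum_k|c_k|^2\big)^{1/2}=d_D^{1/2}\|u\|/\|u_1\|$, and $\|u_1\|\geqslant d_D^{-1/2}$ one obtains $\|Au-\lambda u\|\leqslant d_D^2\,\delta\,\|u\|$; choosing $\delta:=\varepsilon d_D^{-2}$ completes the proof.

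The representation-theoretic identities and the closedness argument that $P^D_{ij}$ preserves $\Dom A$ are routine. The one point I expect to require care --- and the only place where the argument could break --- is making sure the spread vectors $u_k$ do not degenerate as $\delta\to0$: their common norm is bounded below uniformly only because, after relabelling, $\|P^D_{11}v\|\geqslant d_D^{-1/2}$, which is precisely what keeps the constant relating $\|u\|$ and $\sum_k|c_k|$ independent of $\delta$.
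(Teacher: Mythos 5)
Your proof is correct and follows essentially the same route as the paper: form the matrix units $P^D_{lk}=d_D\int_\Gamma\overline{D_{lk}(\gamma)}U(\gamma)\,d\mu(\gamma)$, pick a Weyl vector for $A^D$, find an index with $\|P^D_{ll}v\|$ bounded below by $d_D^{-1/2}$, and spread it by the partial isometries $P^D_{kl}$ to build the $D$--generating subspace. The only cosmetic difference is that you invoke the crude bound $\|P^D_{ij}\|\leqslant d_D$ where the sharper $\|P^D_{ij}\|\leqslant 1$ (they are partial isometries, as the paper notes) would give a better constant, and you track the final constant $d_D^2\delta$ explicitly where the paper leaves it implicit; neither affects correctness.
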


\begin{remark}
Recall that a subspace $G$ of $\mathfrak H$ is called {\em $D$--generating} if the operator $U(\gamma)\arrowvert G$ is unitary in $G$ for all $\gamma\in \Gamma$ and there exists an orthonormal base in $G$ such that for every $\gamma\in \Gamma$ the operator $U(\gamma)\arrowvert G$ is represented by the matrix $D(\gamma)$.
\end{remark}

\begin{proof}{ of Lemma~\ref{generating subspace lemma}}
Let $r$ be the dimension of the representation $D: \gamma\mapsto \big(D_{lk}(\gamma)\big)_{l, k= 1}^r$.
Let us introduce in $\mathfrak H$ the bounded operators $P_{lk}$ by
\begin{equation*}\label{P_lk}
P_{lk}:= r\int_\Gamma \overline{D_{lk}(\gamma)}U(\gamma)d\mu(\gamma), \quad l, k= 1, \dots, r,
\end{equation*}
where $\mu$ is the invariant probability measure on $\Gamma$.
It is shown in the proof of Theorem~8.11 of \cite{JorgensWeidmann1973} that $P_{ll}$ are orthogonal projections onto mutually orthogonal subspaces of $\mathfrak H$ and that
\begin{equation}\label{row decomposition}
P^D= \sum_{l= 1}^rP_{ll}.
\end{equation}
In fact, $P_{ll}$ is the projection on the subspace of function which belong to the $l^{th}$ row of the representation $D$.
Moreover, $P_{lk}$ is a partial isometry between $P_{kk}\mathfrak H$ and $P_{ll}\mathfrak H$. 
Since $\lambda\in\sigma(A^D)$, there exists a vector $u_0\in\Dom A^D$ such that
\begin{equation*}\label{u_0 choice}
\|A^Du_0- \lambda u_0\|\leqslant \varepsilon\|u_0\|.
\end{equation*}
It follows from \eqref{row decomposition} that there exists $l\in \{1, \dots, r\}$ such that $\|P_{ll}u_0\|\geqslant r^{-1}$. We can thus define $u_l:= P_{ll}u_0/\|P_{ll}u_0\|$ and then $u_k:= P_{kl}u_l$ for $k= 1, \dots, r$. The subspace $G$ spanned by $\{u_k\}_{k= 1}^r$ satisfies the statement of the lemma.
\end{proof}

Let
\begin{equation}\label{r_j}
r_j:= \dim(D_j\otimes E_j), \quad j= 1, 2.
\end{equation}
Since $\varkappa_1(Z, D_1, E_1)$ belongs to the spectrum of $\widetilde{\mathcal{H}}_{Z, 1}^{D_1, E_1}$ (see definition \eqref{E}), by Lemma~\ref{generating subspace lemma} we can choose a sequence of $(D_1\otimes E_1)$--generating subspaces $\{G_q\}_{q= 1}^\infty$ of $\Dom(\widetilde{\mathcal{H}}_{Z, 1}^{D_1, E_1})$ such that for all $q\in \mathbb N$
\begin{equation}\label{phi_q correct energy}
\big\|\widetilde{\mathcal{H}}_{Z, 1}^{D_1, E_1}\phi_q- \varkappa_1(Z, D_1, E_1)\phi_q\big\|_{\mathfrak{H}_{Z, 1}}\leqslant q^{-1}\|\phi_q\|_{\mathfrak{H}_{Z, 1}},\ \textrm{for all}\ \phi_q\in G_q.
\end{equation}
Analogously, for any $\mathfrak P\geqslant 0$ we can find a sequence $\{G^{\mathfrak P}_q\}_{q= 1}^\infty$ of $(D_2\otimes E_2)$--generating subspaces of $\Dom\widetilde{\mathcal{H}}_{Z, 2}^{D_2, E_2, \mathfrak{P}}$ such that
\begin{equation}\label{psi_q^P correct energy}
\big\|\widetilde{\mathcal{H}}_{Z, 2}^{D_2, E_2, \mathfrak{P}}\psi_q^{\mathfrak P}- \mu(\mathfrak P)\psi_q^{\mathfrak P}\big\|_*\leqslant q^{-1}\|\psi_q^{\mathfrak P}\big\|_*,\ \textrm{for all}\ \psi^{\mathfrak P}_q\in G_q^{\mathfrak P}.
\end{equation}
Moreover, we can choose a set of functions $\{\psi_{q, l}^{\mathfrak P}\}_{l= 1}^{r_2}$ in $\Dom\widetilde{\mathcal{H}}_{Z, 2}^{D_2, E_2, \mathfrak{P}}$ with 
\begin{equation*}\label{psi^P normalization}
\|\psi_{q, l}^{\mathfrak P}\|_*= 1
\end{equation*}
in such a way that for every $q\in\mathbb N$ and $l=1, \dots, r_2$ $\psi_{q, l}$ belongs to the $l^{th}$ row of the representation $(D_2\otimes E_2)$ and satisfies \eqref{psi_q^P correct energy}.
By Proposition~\ref{reduction for H^D,E}, Lemma~\ref{mu continuity lemma 2}, and \eqref{lambda_1} we can choose $\mathfrak P_0$ in such a way that
\begin{equation}\label{P_0 choice}
\mu(\mathfrak P_0)= \lambda_1.
\end{equation}

We choose $R_q> q$ so that \eqref{V_n decay} and \eqref{U_nj decay} hold true for all $n, j= 1, \dots, N$, $n< j$ with 
\begin{equation}\label{choice of epsilon}
\varepsilon:= q^{-1}(N_1+ 1)^{-1}N_2^{-1/2}C_1^{1/2}\big(C_2+ |\lambda_1|+ 2\big)^{-1/2},
\end{equation}
where $N_{1, 2}$ are the numbers of particles in $Z_{1, 2}$, and $C_{1, 2}$ are the constants in \eqref{assumption 1} for $j= 2$, and so that for some orthonormal base $\{\phi_{q, k}\}_{k= 1}^{r_1}$ of $G_q$
\begin{equation}\label{phi localization}
\bigg\|\Big(1- \prod_{j\in Z_1}I_{\{|\x_j|< R_q\}}\Big)\phi_{q, k}\bigg\|_{L_2(\mathbb R^{3N_1}, \mathbb C^{4^{N_1}})}\leqslant \frac{\nu_0}{4d_E^2r_1r_2},
\end{equation}
where $d_E$ is the dimension of $E$, $r_{1, 2}$ are defined in \eqref{r_j}, and the constant $\nu_0> 0$ depending only on $E, E_1, E_2$ will be specified later in the proof of Lemma~\ref{permutation projection lower bound lemma}.

By Assumption~\ref{square integrability assumption} and Lemma~\ref{mu continuity lemma 2}, we can choose a sequence of positive numbers $\{\delta_q\}_{q= 1}^\infty$ tending to zero in such a way that
\begin{equation}\label{delta_q condition 1}
\big|\mu(\mathfrak P)- \lambda_1\big|\leqslant q^{-1} \quad \textrm{for all}\quad \mathfrak P\in[\mathfrak P_0, \mathfrak P_0+ \delta_q],
\end{equation}
\begin{equation}\label{delta_q condition 2}
\frac1{2\pi^2}(\mathfrak P_0+ \delta_q)^2\delta_qC_{R_q}< q^{-2},
\end{equation}
where $C_{R_q}$ is the constant in \eqref{V and U integrability}, and
\begin{equation}\label{extra delta condition}
\frac1{2\pi^2}(\mathfrak P_0+ \delta_q)^2\delta_q\cdot\frac43\pi R_q^3< \frac{\nu_0^2}{16d_E^4r_1^2r_2^2}.
\end{equation}

Let us choose a function $f_q\in L_2(\mathbb R_+)$ with $\supp f_q\subset [\mathfrak P_0, \mathfrak P_0+ \delta_q]$ so that
\begin{equation}\label{condition on f_q}
\int_{\mathfrak P_0}^{\mathfrak P_0+ \delta_q}\big|f_q(\mathfrak P)\big|^2\mathfrak P^2d\mathfrak P= 1.
\end{equation}
Let
\begin{equation}\label{psi_q}\begin{split}
&\psi_{q, l}(\y_0, \dots, \y_{N_2- 1})\\ &:= \frac1{(2\pi)^{\frac32}}\int\limits_{\mathfrak P_0}^{\mathfrak P_0+ \delta_q}\int\limits_{S^2}e^{i\mathfrak P\omega\y_0}f_q(\mathfrak P)\psi_{q, l}^{\mathfrak P}(\omega, \y_1, \dots, \y_{N_2- 1})\mathfrak P^2d\omega d\mathfrak P,
\end{split}\end{equation}
where $\{\y_0, \dots, \y_{N_2- 1}\}$ and $\{\x_n\}_{n\in Z_2}$ are related by \eqref{new coordinates Z_2} and \eqref{old coordinates}. It follows from \eqref{condition on f_q} and the choice of $\psi_{q, l}^{\mathfrak P}$ that
\begin{equation}\label{psi_ql normalization}
\|\psi_{q, l}\|_{\mathfrak H_{Z, 2}}= 1, \quad l= 1, \dots, N_2,
\end{equation}
and that $\psi_{q, l}$ belongs to the $l^{th}$ row of $(D_2\otimes E_2)$.
Clearly the linear subspace $\widetilde G_q$ spanned by $\{\psi_{q, l}\}_{l= 1}^{r_2}$ is a $(D_2\otimes E_2)$--generating subspace of $\Dom \widetilde{\mathcal H}_{Z, 2}^{D_2, E_2}$.
\begin{lemma}\label{density boundedness lemma}
For any $n\in Z_2$ and $\psi\in \widetilde G_q$ with $\|\psi\|= 1$ the one--particle density
\begin{equation*}\label{1-particle density}
\rho_{\psi, n}(\x_n):= \int_{\mathbb R^{3N_2- 3}}\big|\psi(\x_{n_1}, \dots, \x_{n_{N_2}})\big|^2(d\x_{n_1}\cdots d\x_{n_{N_2}})/d\x_n
\end{equation*}
satisfies
\begin{equation*}\label{density boundedness}
\|\rho_{\psi, n}\|_{L_\infty(\mathbb R^3)}\leqslant \frac1{2\pi^2}(\mathfrak P_0+ \delta_q)^2\delta_q.
\end{equation*}
\end{lemma}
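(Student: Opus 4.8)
The goal is a pointwise bound on the one–particle density of functions in $\widetilde G_q$, uniformly in $\psi$ with $\|\psi\|=1$. The essential point is that every such $\psi$ is, by \eqref{psi_q} and the definition of $\widetilde G_q$, built by integrating the fibre functions $\psi^{\mathfrak P}_{q, l}$ against a plane wave in the centre–of–mass variable $\y_0$ over the thin frequency shell $\mathfrak P\in[\mathfrak P_0, \mathfrak P_0+\delta_q]$ and over the sphere $S^2$. Since the support in $|\p|$ has width $\delta_q$ and is concentrated near $\mathfrak P_0$, the Fourier transform of $\psi$ in $\y_0$ is supported in a ball of radius $\mathfrak P_0+\delta_q$ and shell–thickness $\delta_q$; this is exactly what will control the sup–norm of the density.

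First I would reduce to the case $n=n_1$ and note that it suffices to estimate $\|\psi\|^2_{L_\infty(d\x_{n_1}; L_2(d\x_{n_2}\cdots d\x_{n_{N_2}}))}$, because $\rho_{\psi,n_1}(\x_{n_1})$ is obtained by integrating $|\psi|^2$ over the remaining particle coordinates, which after the change of variables \eqref{new coordinates Z_2} become integration over $\y_1, \dots, \y_{N_2-1}$ together with integration over $\y_0$ restricted appropriately — but in fact it is cleanest to pass to the centre–of–mass/internal coordinates and observe that in those coordinates $\psi$ is an explicit oscillatory integral in $\y_0$. The second step is then a direct estimate: writing $\psi_{q,l}(\y_0, \y_1, \dots, \y_{N_2-1})$ as in \eqref{psi_q}, for fixed internal coordinates the map $\y_0\mapsto\psi_{q,l}(\y_0, \cdot)$ is the inverse Fourier transform of a function supported in $\{|\p|\in[\mathfrak P_0, \mathfrak P_0+\delta_q]\}$, and by Cauchy–Schwarz in $\p$,
\[
\big|\psi_{q,l}(\y_0, \y_1, \dots, \y_{N_2-1})\big|^2\leqslant \frac1{(2\pi)^3}\,\mathrm{vol}\{|\p|\in[\mathfrak P_0, \mathfrak P_0+\delta_q]\}\cdot\int_{S^2}\int_{\mathfrak P_0}^{\mathfrak P_0+\delta_q}\big|f_q(\mathfrak P)\psi_{q,l}^{\mathfrak P}(\omega, \y_1, \dots, \y_{N_2-1})\big|^2\mathfrak P^2\,d\omega\, d\mathfrak P,
\]
uniformly in $\y_0$. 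The volume of the shell is at most $4\pi(\mathfrak P_0+\delta_q)^2\delta_q$, so the prefactor is $\tfrac1{(2\pi)^3}\cdot4\pi(\mathfrak P_0+\delta_q)^2\delta_q=\tfrac1{2\pi^2}(\mathfrak P_0+\delta_q)^2\delta_q$, which is exactly the claimed constant.

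The third step is to integrate this pointwise bound over the internal coordinates $\y_1, \dots, \y_{N_2-1}$, and over the sphere variable implicit in the fibre: the right–hand side of the displayed inequality integrates (in $\y_1, \dots, \y_{N_2-1}$, and recalling the normalization of the fibre inner product $\langle\cdot,\cdot\rangle_*$) to $\tfrac1{2\pi^2}(\mathfrak P_0+\delta_q)^2\delta_q\cdot\int_{\mathfrak P_0}^{\mathfrak P_0+\delta_q}|f_q(\mathfrak P)|^2\|\psi_{q,l}^{\mathfrak P}\|_*^2\mathfrak P^2\,d\mathfrak P$, which by \eqref{condition on f_q} and $\|\psi_{q,l}^{\mathfrak P}\|_*=1$ equals $\tfrac1{2\pi^2}(\mathfrak P_0+\delta_q)^2\delta_q$. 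For a general $\psi\in\widetilde G_q$ with $\|\psi\|=1$ one writes $\psi=\sum_l c_l\psi_{q,l}$ with $\sum|c_l|^2=1$ (the $\psi_{q,l}$ being orthonormal since they lie in distinct rows of the representation), applies the pointwise bound to each term, and uses that the cross terms integrate to zero by row–orthogonality of the fibre functions at each fixed $\mathfrak P$; this preserves the constant. Finally, $\rho_{\psi,n_1}$ is obtained by integrating $|\psi|^2$ over all coordinates except $\x_{n_1}$, so it is bounded by the $L_\infty$ in $\y_0$ of the internal $L_2$–norm squared, which is precisely what we just bounded, after noting that integrating out $\x_{n_2},\dots,\x_{n_{N_2}}$ with $\x_{n_1}$ fixed is the same as integrating out $\y_1, \dots, \y_{N_2-1}$ together with $\y_0$ along the appropriate affine relation — and since our bound on $|\psi(\y_0,\cdot)|^2$ is uniform in $\y_0$, this integration only reproduces the internal integral.

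**Main obstacle.** The only genuinely delicate point is bookkeeping the change of variables \eqref{new coordinates Z_2}–\eqref{old coordinates} carefully enough to see that $\rho_{\psi,n}(\x_n)$ — an integral over the \emph{original} coordinates holding $\x_n$ fixed — really is controlled by the \emph{centre–of–mass}–sup of the internal $L_2$–norm, rather than by something genuinely two–dimensional in the fibre decomposition. Since the change of variables has Jacobian one and the pointwise estimate on $|\psi|^2$ is uniform in $\y_0$, this works out, but it requires being slightly attentive to which variable plays the role of the "free" coordinate for each choice of $n\in Z_2$; replacing $n_1$ by a general $n_k$ only permutes the roles of the $\y_j$ and changes nothing in the estimate.
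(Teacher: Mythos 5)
Your proposal is correct, and it takes a noticeably different route from the paper. The paper bounds $\|\rho_{\psi,n}\|_{L_\infty}$ by $\|\widehat{\rho_{\psi,n}}\|_{L_1}$, writes $\widehat{\rho_{\psi,n}}$ as a multiple oscillatory integral using \eqref{psi_q}, performs the $\y_0$--integration to produce a delta factor $\delta(\mathbf p+\mathfrak P\omega-\widetilde{\mathfrak P}\widetilde\omega)$, integrates this out against $d\mathbf p$, and only then applies the Schwarz inequality to the remaining shell integral. You instead apply Cauchy--Schwarz directly in the $(\mathfrak P,\omega)$--variables in \eqref{psi_q}, obtaining a bound on $|\psi(\y_0,\cdot)|^2$ that is \emph{uniform} in $\y_0$ with the shell volume $4\pi(\mathfrak P_0+\delta_q)^2\delta_q$ as the prefactor, and then integrate this over the internal coordinates; the constant $\tfrac1{2\pi^2}(\mathfrak P_0+\delta_q)^2\delta_q$ appears in the same way but via a more elementary calculation that never leaves position space in $\y_0$. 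The two arguments encode the same thin--shell compactness; yours is shorter and bypasses the distribution--valued manipulation with the delta function, at the small cost of having to observe (as you do) that fixing $\x_{n_k}$ and integrating over the remaining $\x_{n_j}$ carves out an affine slice of the $(\y_0,\y_1,\dots,\y_{N_2-1})$--coordinates with unit Jacobian (the map $(\x_{n_1},\dots,\x_{n_{N_2}})\mapsto(\x_{n_k},\y_1,\dots,\y_{N_2-1})$ is volume--preserving), so that the sup--in--$\y_0$ bound indeed dominates $\rho_{\psi,n}$. Your parenthetical remark that passing from $n_1$ to a general $n_k$ ``only permutes the roles of the $\y_j$'' is not literally accurate --- it is a unimodular affine substitution rather than a permutation --- but the conclusion you draw from it (Jacobian one, uniform bound applies) is correct. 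The step from the basis elements $\psi_{q,l}$ to a general normalized $\psi\in\widetilde G_q$ is also handled correctly: the fibre functions $\psi_{q,l}^{\mathfrak P}$ lie in distinct rows of an irreducible representation, hence are $\langle\cdot,\cdot\rangle_*$--orthonormal for each $\mathfrak P$, so $\big\|\sum_l c_l\psi_{q,l}^{\mathfrak P}\big\|_*^2=\sum_l|c_l|^2=1$ and the same constant survives.
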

\begin{proof}{}
By \eqref{psi_q}
\begin{equation}\label{calculation}\begin{split}
&\|\rho_{\psi, n}\|_{L_\infty(\mathbb R^3)}\leqslant (2\pi)^{-3/2}\|\widehat\rho_{\psi, n}\|_{L_1(\mathbb R^3)}\\ &= \frac1{(2\pi)^6}\int_{\mathbb R^3}\bigg|\int_{\mathbb R^{3N_2}}\int_{\mathfrak P_0}^{\mathfrak P_0+ \delta_q}\int_{S^2}\int_{\mathfrak P_0}^{\mathfrak P_0+ \delta_q}\int_{S^2}e^{-i\mathbf p(\y_0+ \mathbf r_n)}e^{-i\mathfrak P\omega\y_0}\overline{f_q(\mathfrak P)}\\ &\times\psi_q^{\mathfrak P*}(\omega, \y_1, \dots, \y_{N_2- 1})e^{i\widetilde{\mathfrak P}\widetilde\omega\y_0}f_q(\widetilde{\mathfrak P})\psi_q^{\widetilde{\mathfrak P}}(\widetilde\omega, \y_1, \dots, \y_{N_2- 1})\mathfrak P^2\widetilde{\mathfrak P}^2\\ &\times d\widetilde\omega\, d\widetilde{\mathfrak P}\, d\omega\, d\mathfrak P\, d\y_0\, d\y_1\cdots d\y_{N_2- 1}\bigg|d\mathbf p,
\end{split}\end{equation}
where $\mathbf r_n:= \x_n- \y_0$, see \eqref{old coordinates}. Integrating the \rhs of \eqref{calculation} in $\y_0$ we obtain $(2\pi)^3\delta(\mathbf p+ \mathfrak P\omega- \widetilde{\mathfrak P}\widetilde\omega)$ from all the factors involving $\y_0$. Estimating the absolute value of the integral by the integral of absolute value and taking into account that $\int\delta(\mathbf p+ \dots)d\mathbf p= 1$ we get
\begin{equation}\label{calculation 2}\begin{split}
&\|\rho_{\psi, n}\|_{L_\infty(\mathbb R^3)}\leqslant \frac1{(2\pi)^3}\int_{\mathbb R^{3N_2- 3}}\int_{\mathfrak P_0}^{\mathfrak P_0+ \delta_q}\int_{S^2}\int_{\mathfrak P_0}^{\mathfrak P_0+ \delta_q}\int_{S^2}\big|f_q(\mathfrak P)\big|\big|f_q(\widetilde{\mathfrak P})\big|\\ &\times\big|\psi_q^{\mathfrak P}(\omega, \y_1, \dots, \y_{N_2- 1})\big|\big|\psi_q^{\widetilde{\mathfrak P}}(\widetilde\omega, \y_1, \dots, \y_{N_2- 1})\big|\mathfrak P^2\widetilde{\mathfrak P}^2\\ &\times d\widetilde\omega\, d\widetilde{\mathfrak P}\, d\omega\, d\mathfrak P\, d\y_1\cdots d\y_{N_2- 1}\leqslant \frac1{(2\pi)^3}4\pi(\mathfrak P_0+ \delta_q)^2\delta_q,
\end{split}\end{equation}
where at the last step we have used Schwarz inequality and $\|\psi\|= 1$. The formal calculation \eqref{calculation} --- \eqref{calculation 2} is justified by the fact that the integral over $\mathbb R^{3N_2}$ can be considered as a limit of integrals over expanding finite volumes, since $\psi\in L_2(\mathbb R^{3N_2})$.
\end{proof}
\begin{corollary}\label{L_2 times rho corollary}
For any $W\in L_2(\mathbb R^3)$, $n\in Z_2$, and $\psi\in \widetilde G_q$ with $\|\psi\|= 1$ we have
\begin{equation*}\label{W psi estimate}
\int_{\mathbb R^{3N_2}}\big|W(\x_n)\psi(\x_{n_1}, \dots, \x_{n_{N_2}})\big|^2d\x_{n_1}\cdots d\x_{n_{N_2}}\leqslant \frac1{2\pi^2}(\mathfrak P_0+ \delta_q)^2\delta_q\|W\|^2.
\end{equation*}
\end{corollary}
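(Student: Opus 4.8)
The plan is to deduce the claim directly from Lemma~\ref{density boundedness lemma} after integrating out every coordinate except $\x_n$. Since the integrand $\big|W(\x_n)\psi(\x_{n_1},\dots,\x_{n_{N_2}})\big|^2$ is nonnegative and $\psi\in L_2(\mathbb R^{3N_2})$, Tonelli's theorem allows me to perform the integration over the variables $(d\x_{n_1}\cdots d\x_{n_{N_2}})/d\x_n$ first; as $W(\x_n)$ is constant in these variables, the result is precisely $\int_{\mathbb R^3}\big|W(\x_n)\big|^2\rho_{\psi,n}(\x_n)\,d\x_n$, where $\rho_{\psi,n}$ is the one-particle density appearing in Lemma~\ref{density boundedness lemma}. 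The same square-integrability of $\psi$ guarantees that $\rho_{\psi,n}(\x_n)<\infty$ for a.e.\ $\x_n$, so this rewriting is legitimate.

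Next I would estimate the remaining integral by H\"older's inequality in the $L_1$--$L_\infty$ pairing, $\int_{\mathbb R^3}\big|W(\x_n)\big|^2\rho_{\psi,n}(\x_n)\,d\x_n\leqslant \|W\|_{L_2(\mathbb R^3)}^2\,\big\|\rho_{\psi,n}\big\|_{L_\infty(\mathbb R^3)}$, and then invoke Lemma~\ref{density boundedness lemma}: since $\psi\in\widetilde G_q$ with $\|\psi\|=1$, we have $\big\|\rho_{\psi,n}\big\|_{L_\infty(\mathbb R^3)}\leqslant \frac1{2\pi^2}(\mathfrak P_0+\delta_q)^2\delta_q$. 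Combining the two estimates yields the asserted bound. There is essentially no obstacle here: all the analytic work — the explicit $L_\infty$ bound on the one-particle density via the Fourier representation \eqref{psi_q} of $\psi_{q,l}$ and the $\delta$-function reduction in the center-of-mass variable $\y_0$ — has already been carried out in Lemma~\ref{density boundedness lemma}, and the corollary is merely its repackaging when tested against an $L_2$ weight $W$ depending only on $\x_n$.
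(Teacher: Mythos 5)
Your argument is correct and is precisely the derivation the paper leaves implicit: Tonelli to peel off the $\x_n$-integral, recognize the inner integral as $\rho_{\psi,n}(\x_n)$, apply the $L_1$--$L_\infty$ H\"older pairing $\int|W|^2\rho_{\psi,n}\leqslant\|W\|_{L_2}^2\|\rho_{\psi,n}\|_{L_\infty}$, and invoke Lemma~\ref{density boundedness lemma}. This is the intended route, which is why the paper states the result as a corollary without a written proof.
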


Let $F_q$ be a subspace of $\mathfrak H_N$ spanned by the functions
\begin{equation}\label{varphi_q}\begin{split}
\varphi_{q, k, l}(\x_1, \dots, \x_N):= \phi_{q, k}(\x_j: j\in Z_1)\otimes\psi_{q, l}(\x_n: n\in Z_2),\\ k= 1, \dots, r_1, \quad l= 1, \dots, r_2,
\end{split}\end{equation}
where $\{\phi_{q, k}\}_{k= 1}^{r_1}$ and $\{\psi_{q, l}\}_{l= 1}^{r_2}$ are orthonormal bases of $G_q$ and $\widetilde G_q$, respectively. We obviously have $\|\varphi_{q, k, l}\|_{L_2(\mathbb R^{3N}, \mathbb C^{4^N})}= 1$.
\begin{lemma}\label{unsymmetrized Weyl lemma}
For any $q\in\mathbb N$ $F_q\subset \Dom \mathcal H_N$. For any $\varphi\in F_q$
\begin{equation*}\label{on F_q}
\big\|(\mathcal H_N- \lambda)\varphi\big\|\leqslant 5q^{-1}r_1^{1/2}r_2^{1/2}\|\varphi\|.
\end{equation*}
\end{lemma}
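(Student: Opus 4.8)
The plan is to evaluate $\mathcal H_N$ on each basis vector $\varphi_{q,k,l}=\phi_{q,k}\otimes\psi_{q,l}$ of $F_q$, decompose the result into the two cluster contributions plus an intercluster remainder, bound each piece by a multiple of $q^{-1}$, and then pass to a general $\varphi\in F_q$ by expanding it in the orthonormal basis $\{\varphi_{q,k,l}\}$ — this last step is what produces the factor $r_1^{1/2}r_2^{1/2}$.

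Since $\Lambda^N=\Lambda_{Z,1}\otimes\Lambda_{Z,2}$ and $\phi_{q,k}$, $\psi_{q,l}$ lie in the ranges of $\Lambda_{Z,1}$, $\Lambda_{Z,2}$, we have $\Lambda^N\varphi_{q,k,l}=\varphi_{q,k,l}$, so
\[
\mathcal H_N\varphi_{q,k,l}=\Lambda^N\Big(\mathcal H_{Z,1}+\mathcal H_{Z,2}+\sum_{n\in Z_2}V_n+\sum_{\substack{n<j\\ n\# j}}U_{nj}\Big)\varphi_{q,k,l},
\]
with $\mathcal H_{Z,1},\mathcal H_{Z,2}$ as in \eqref{H_Z_1}, \eqref{H_Z_2}. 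Because $\mathcal H_{Z,1}$ acts only on the $Z_1$ variables and commutes with $\Lambda_{Z,2}$ (and symmetrically for $\mathcal H_{Z,2}$), one gets $\Lambda^N\mathcal H_{Z,1}\varphi_{q,k,l}=(\widetilde{\mathcal H}_{Z,1}^{D_1,E_1}\phi_{q,k})\otimes\psi_{q,l}$ and $\Lambda^N\mathcal H_{Z,2}\varphi_{q,k,l}=\phi_{q,k}\otimes(\widetilde{\mathcal H}_{Z,2}^{D_2,E_2}\psi_{q,l})$. By \eqref{phi_q correct energy} the first is within $q^{-1}$ of $\varkappa_1(Z,D_1,E_1)\varphi_{q,k,l}$ in norm; for the second, passing through the direct integral \eqref{direct integral decomposition} and combining \eqref{psi_q^P correct energy} with \eqref{P_0 choice}, \eqref{delta_q condition 1} and \eqref{condition on f_q} shows it is within $2q^{-1}$ of $\lambda_1\varphi_{q,k,l}$. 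As $\varkappa_1(Z,D_1,E_1)+\lambda_1=\lambda$ by \eqref{lambda_1}, this yields $\big\|\Lambda^N(\mathcal H_{Z,1}+\mathcal H_{Z,2})\varphi_{q,k,l}-\lambda\varphi_{q,k,l}\big\|\leqslant 3q^{-1}$, and the same identities (together with the bounds below) exhibit $\mathcal H_N\varphi_{q,k,l}$ as a sum of $L_2$ vectors, so $F_q\subset\Dom\mathcal H_N$.

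For the intercluster remainder I would use $\|\Lambda^N\|\leqslant1$ and split each potential as $V_n=V_nI_{\{|\x_n|\leqslant R_q\}}+V_nI_{\{|\x_n|>R_q\}}$, similarly for $U_{nj}$. On the far pieces Assumption~\ref{decay assumption} with the $\varepsilon$ of \eqref{choice of epsilon} gives, for each $V_n$ with $n\in Z_2$ and each intercluster $U_{nj}$ (taking the index in $Z_2$ in the minimum of \eqref{U_nj decay}), a bound by $\varepsilon\||D_{n'}|^{1/2}\varphi_{q,k,l}\|$ with $n'\in Z_2$; each $n'\in Z_2$ occurs at most $N_1+1$ times in this total sum, so Cauchy--Schwarz turns it into $\varepsilon\,N_2^{1/2}(N_1+1)\big(\langle\sum_{n\in Z_2}D_n\psi_{q,l},\psi_{q,l}\rangle\big)^{1/2}$, and $\langle\sum_{n\in Z_2}D_n\psi_{q,l},\psi_{q,l}\rangle\leqslant(C_2+|\lambda_1|+2)C_1^{-1}$ by Assumption~\ref{lower assumption} for $j=2$, \eqref{square root formula}, and the estimate above for $\widetilde{\mathcal H}_{Z,2}^{D_2,E_2}\psi_{q,l}$; the normalization built into \eqref{choice of epsilon} then makes the far sum $\leqslant q^{-1}$. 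On the near pieces Corollary~\ref{L_2 times rho corollary}, valid since $\psi_{q,l}\in\widetilde G_q$, gives $\|V_nI_{\{|\x_n|\leqslant R_q\}}\varphi_{q,k,l}\|\leqslant\big(\tfrac1{2\pi^2}(\mathfrak P_0+\delta_q)^2\delta_q\big)^{1/2}\big(\int_{|\x|\leqslant R_q}|V_n|^2\big)^{1/2}$ and the analogue for $U_{nj}$; summing and using Assumption~\ref{square integrability assumption} and the choice \eqref{delta_q condition 2} of $\delta_q$ bounds the near sum by $q^{-1}$. Hence $\|(\mathcal H_N-\lambda)\varphi_{q,k,l}\|\leqslant 5q^{-1}$.

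Finally, for $\varphi=\sum_{k,l}c_{k,l}\varphi_{q,k,l}\in F_q$, the orthonormality of $\{\varphi_{q,k,l}\}$ gives $\|\varphi\|^2=\sum_{k,l}|c_{k,l}|^2$, so by the triangle inequality and Cauchy--Schwarz
\[
\big\|(\mathcal H_N-\lambda)\varphi\big\|\leqslant 5q^{-1}\sum_{k,l}|c_{k,l}|\leqslant 5q^{-1}(r_1r_2)^{1/2}\|\varphi\|,
\]
which is the claim. I expect the main obstacle to be the bookkeeping in the intercluster far-field sum: one must recognize that summing the $\varepsilon\||D_{n'}|^{1/2}\,\cdot\,\|$-bounds over all intercluster pairs and all $V_n$ with $n\in Z_2$ costs only a factor $N_2^{1/2}(N_1+1)$ — via Cauchy--Schwarz applied to $\sum_{n\in Z_2}\||D_n|^{1/2}\varphi_{q,k,l}\|$, not a crude $N_2(N_1+1)$ — and that this is precisely what the constants in \eqref{choice of epsilon} are designed to cancel.
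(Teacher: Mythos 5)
Your proposal follows the paper's proof essentially step for step: the same decomposition $\mathcal H_N=\mathcal H_{Z,1}+\mathcal H_{Z,2}+\Lambda^N(\text{intercluster})\Lambda^N$ on the tensor-product basis $\varphi_{q,k,l}$, the same $q^{-1}+2q^{-1}$ from \eqref{phi_q correct energy} and \eqref{psi_q^P correct energy}--\eqref{delta_q condition 1}, the same near/far splitting of the intercluster potentials with \eqref{choice of epsilon}, \eqref{technical stuff} for the far field and Corollary~\ref{L_2 times rho corollary}, \eqref{V and U integrability}, \eqref{delta_q condition 2} for the near field, and the same triangle-plus-Cauchy passage from basis vectors to general $\varphi\in F_q$ producing $r_1^{1/2}r_2^{1/2}$. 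No substantive difference.
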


\begin{proof}{}
It is enough to show that the functions \eqref{varphi_q} belong to $\Dom \mathcal H_N$ and satisfy
\begin{equation}\label{unsymmetrized Weyl}
\big\|(\mathcal H_N- \lambda)\varphi_{q, k, l}\big\|_{L_2(\mathbb R^{3N}, \mathbb C^{4^N})}\leqslant 5q^{-1}.
\end{equation}
Indeed, by triangle and Cauchy inequalities for
\begin{equation}\label{phi in base}
\varphi= \sum_{k= 1}^{r_1}\sum_{l= 1}^{r_2}c_{kl}\varphi_{q, k, l}
\end{equation}
we have
\begin{equation*}\begin{split}
\big\|(\mathcal H_N- \lambda)\varphi\big\|&\leqslant \sum_{k= 1}^{r_1}\sum_{l= 1}^{r_2}|c_{kl}|\big\|(\mathcal H_N- \lambda)\varphi_{q, k, l}\big\|\\ &\leqslant \underset{k, l}\sup \big\|(\mathcal H_N- \lambda)\varphi_{q, k, l}\big\|r_1^{1/2}r_2^{1/2}\|\varphi\|.
\end{split}\end{equation*}

The domain of $\mathcal H_N$ can be characterized as the set of functions $\xi$ from the form domain $\underset{n= 1}{\overset N\otimes}\Lambda_nH^{1/2}(\mathbb{R}^3, \mathbb{C}^4)$ on which the sesquilinear form $\langle\mathcal H_N\xi, \cdot\rangle$ is a bounded linear functional in $\mathfrak{H}_N$.
Functions \eqref{varphi_q} belong to $\underset{n= 1}{\overset N\otimes}\Lambda_nH^{1/2}(\mathbb{R}^3, \mathbb{C}^4)$ by construction.
By \eqref{H_N'}, \eqref{H_Z_1}, and \eqref{H_Z_2} we have
\begin{equation}\label{clusters+ intercluster interaction}
\mathcal H_N= \mathcal H_{Z, 1}+ \mathcal H_{Z, 2}+ \Lambda^N\bigg(\sum_{n\in Z_2}V_n+ \sum_{\substack{n< j\\ n\# j}}U_{nj}\bigg)\Lambda^N.
\end{equation}
The sesquilinear forms $\langle(\mathcal H_{Z, 1}+ \mathcal H_{Z, 2})\varphi_{q, k, l}, \cdot\rangle$ are bounded linear functionals over $L_2(\mathbb{R}^{3N}, \mathbb{C}^{4^N})$, since $\phi_{q, k}\in \Dom(\widetilde{\mathcal{H}}_{Z, 1}^{D_1, E_1})$ and $\psi_{q, l}\in \Dom \widetilde H_{Z, 2}^{D_2, E_2}$.
Moreover, by \eqref{phi_q correct energy}
\begin{equation*}\label{first correct energy}
\Big\|\big(\mathcal H_{Z, 1}- \varkappa_1(Z, D_1, E_1)\big)\varphi_{q, k, l}\Big\|= \Big\|\big(\widetilde{\mathcal{H}}_{Z, 1}^{D_1, E_1}- \varkappa_1(Z, D_1, E_1)\big)\phi_{q, k}\Big\|\leqslant q^{-1},
\end{equation*}
and by \eqref{psi_q^P correct energy}, \eqref{P_0 choice}, \eqref{delta_q condition 1}, \eqref{psi_q}, and \eqref{psi_ql normalization}
\begin{equation}\label{second correct energy}
\big\|(\mathcal H_{Z, 2}- \lambda_1)\varphi_{q, k, l}\big\|= \big\|(\widetilde{\mathcal{H}}_{Z, 2}^{D_2, E_2}- \lambda_1)\psi_{q, l}\big\|\leqslant 2q^{-1}.
\end{equation}
In view of \eqref{clusters+ intercluster interaction}---\eqref{second correct energy} and \eqref{lambda_1}, to prove that $\varphi_{q, k, l}\in \Dom \mathcal H_N$ and that \eqref{unsymmetrized Weyl} holds true it is enough to obtain that
\begin{equation}\label{intercluster interaction}
\bigg\|\bigg(\sum_{n\in Z_2}V_n+ \sum_{\substack{n< j\\ n\# j}}U_{nj}\bigg)\varphi_{q, k, l}\bigg\|\leqslant 2q^{-1}.
\end{equation}
To do this, we first note that by \eqref{V_n decay}, \eqref{U_nj decay}, and Cauchy inequality
\begin{equation}\label{V_n outside}\begin{split}
&\bigg\|\bigg(\sum_{n\in Z_2}V_nI_{\{|\x_n|> R_q\}}+ \sum_{\substack{n< j\\ n\# j}}U_{nj}I_{\{|\x_n- \x_j|> R_q\}}\bigg)\varphi_{q, k, l}\bigg\|\\ &\leqslant \varepsilon(N_1+ 1)\sum_{n\in Z_2}\big\||D_n|^{\frac12}\psi_{q, l}\big\|\leqslant \varepsilon(N_1+ 1)N_2^{\frac12}\bigg(\sum_{n\in Z_2}\big\||D_n|^{\frac12}\psi_{q, l}\big\|^2\bigg)^{\frac12}.
\end{split}\end{equation}
By \eqref{assumption 1}, \eqref{psi_ql normalization}, and \eqref{second correct energy},
\begin{equation}\label{technical stuff}\begin{split}
\sum_{n\in Z_2}\|D_n^{1/2}\psi_{q, l}\|^2&\leqslant C_1^{-1}\Big(\big\|(\widetilde{\mathcal H}_{Z, 2}^{D_2, E_2}- \lambda_1)\psi_{q, l}\big\|+ C_2+ |\lambda_1|\Big)\\ &\leqslant C_1^{-1}\big(C_2+ |\lambda_1|+ 2q^{-1}\big).
\end{split}\end{equation}
Thus by \eqref{V_n outside}, \eqref{technical stuff} and \eqref{choice of epsilon} for $q\geqslant 1$ we obtain
\begin{equation}\label{outside estimate}
\bigg\|\bigg(\sum_{n\in Z_2}V_nI_{\{|\x_n|> R_q\}}+ \sum_{\substack{n< j\\ n\# j}}U_{nj}I_{\{|\x_n- \x_j|> R_q\}}\bigg)\varphi_{q, k, l}\bigg\|\leqslant q^{-1}.
\end{equation}
Now the scalar functions
\begin{equation}\label{cut off in ball}
V_{n, q}(\x):= \big|V_n(\x)\big|I_{\{|\x|\leqslant R_q\}}(\x)\quad \textrm{and}\quad U_{nj, q}(\x):= \big|U_{nj}(\x)\big|I_{\{|\x|\leqslant R_q\}}(\x)
\end{equation}
are square integrable by \eqref{V and U integrability}. By Corollary~\ref{L_2 times rho corollary}, for $n\in Z_2$
\begin{equation}\label{long calculation for V_n}
\|V_{n, q}\varphi_{q, k, l}\|^2= \|V_{n, q}\psi_{q, l}\|^2\leqslant \frac1{2\pi^2}\delta_q(\mathfrak P_0+ \delta_q)^2\|V_{n, q}\|_{L_2(\mathbb R^3)}^2
\end{equation}
and for $n< j$, $n\# j$
\begin{equation}\label{try to be short}
\|U_{nj, q}\varphi_{q, k, l}\|^2\leqslant \underset{\mathbf{z}\in \mathbb R^3}\sup \big\|U_{nj, q}(\cdot- \mathbf z)\psi_{q, l}\big\|^2\leqslant \frac1{2\pi^2}\delta_q(\mathfrak P_0+ \delta_q)^2\|U_{nj, q}\|_{L_2(\mathbb R^3)}^2.
\end{equation}
Hence by \eqref{cut off in ball}, \eqref{long calculation for V_n}, \eqref{try to be short}, \eqref{V and U integrability}, and \eqref{delta_q condition 2}
\begin{equation}\label{inside estimate}\begin{split}
\bigg\|\bigg(\sum_{n\in Z_2}V_nI_{\{|\x_n|\leqslant R_q\}}+ \sum_{\substack{n< j\\ n\# j}}U_{nj}I_{\{|\x_n- \x_j|\leqslant R_q\}}\bigg)\varphi_{q, k, l}\bigg\|\leqslant q^{-1}.
\end{split}\end{equation}
It remains to add \eqref{outside estimate} and \eqref{inside estimate} to obtain \eqref{intercluster interaction}, finishing the proof of the lemma.
\end{proof}

The subspace $F_q$ spanned by the functions \eqref{varphi_q} is $D_1\otimes E_1\otimes D_2\otimes E_2$--generating. Since it is a sum of $D_1\otimes D_2$--generating subspaces, and $(D_1, E_1; D_2, E_2)\underset Z\prec (D, E)$, $F_q$ contains some nontrivial $D$--generating subspace. Hence the subspace $K_q:= P^DF_q$ is not equal to $\{0\}$ and is contained in $F_q$. 
\begin{lemma}\label{permutation projection lower bound lemma}
There exists a constant $C_E> 0$ such that for every $q$
\begin{equation}\label{permutation projection lower bound}
\|P^E\varphi\|\geqslant C_E\|\varphi\|, \quad \textrm{for all}\quad \varphi\in F_q.
\end{equation}
\end{lemma}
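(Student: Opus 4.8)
The plan is to use that $P^E$ is the orthogonal projection onto the $E$--isotypic subspace (as noted after \eqref{permutation projection}), so $\|P^E\varphi\|^2=\langle P^E\varphi,\varphi\rangle$, and to split the sum over $\Pi$ in \eqref{permutation projection} into the part indexed by the subgroup $H:=\Pi^Z_1\times\Pi^Z_2$ and the part indexed by $\Pi\setminus H$. Here $H$ is precisely the subgroup of $\Pi$ whose elements preserve the partition $\{Z_1,Z_2\}$, so every $\pi\in\Pi\setminus H$ maps at least one particle of $Z_1$ into $Z_2$. Writing $P^E=A+B$ accordingly, with $A$ the sum over $H$, we get $\|P^E\varphi\|^2=\langle A\varphi,\varphi\rangle+\langle B\varphi,\varphi\rangle$, and the two terms are handled by entirely different means: $A$ by pure representation theory, $B$ by the localisation built into $F_q$.

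First I would evaluate $\langle A\varphi,\varphi\rangle$ exactly. Every $\pi\in H$ preserves the product subspace $F_q=G_q\otimes\widetilde G_q$ and acts on it as $\pi_1\otimes\pi_2$. Since $G_q$ is $(D_1\otimes E_1)$--generating and $\widetilde G_q$ is $(D_2\otimes E_2)$--generating, restricting their group actions to $\Pi^Z_1$ and $\Pi^Z_2$ shows that, as a $\Pi^Z_1$--module $G_q$ is a multiple of the irrep $E_1$ and, as a $\Pi^Z_2$--module, $\widetilde G_q$ a multiple of $E_2$; hence $F_q$ is $(E_1\otimes E_2)$--isotypic as an $H$--module. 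As $\xi_E$ restricted to $H$ is a class function, $A$ commutes with the $H$--action on $F_q$, so Schur's lemma gives $A|_{F_q}=\nu_0\,\mathrm{Id}_{F_q}$ with
\[
\nu_0=\frac{d_E\,h_{\Pi^Z_1}h_{\Pi^Z_2}}{h_\Pi\,d_{E_1}d_{E_2}}\,\bigl\langle\,\xi_E\,,\ \xi_{E_1}\otimes\xi_{E_2}\,\bigr\rangle_H ,
\]
the bracket being the multiplicity of $E_1\otimes E_2$ in $E|_H$, a positive integer because $(D_1,E_1;D_2,E_2)\underset{Z}{\prec}(D,E)$. Thus $\nu_0>0$ depends only on $E,E_1,E_2$ (and the group orders), and $\langle A\varphi,\varphi\rangle=\nu_0\|\varphi\|^2$; this is the constant $\nu_0$ already invoked in \eqref{phi localization} and \eqref{extra delta condition}.

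Next I would show $|\langle B\varphi,\varphi\rangle|\leqslant\frac{\nu_0}{2}\|\varphi\|^2$. Fix $\pi\in\Pi\setminus H$ and choose $a\in Z_1$ with $\pi(a)\in Z_2$. In $\pi\varphi_{q,k,l}$ the variable $\x_a$ is then carried by the far--out, low--density cluster factor $\psi_{q,l}$, while in $\varphi_{q,k',l'}$ it is carried by the ball--localised factor $\phi_{q,k'}$. I would estimate $\langle\pi\varphi_{q,k,l},\varphi_{q,k',l'}\rangle$ by splitting according to $\{|\x_a|\geqslant R_q\}$ and $\{|\x_a|<R_q\}$: on the first set, $\|I_{\{|\x_a|\geqslant R_q\}}\varphi_{q,k',l'}\|=\|I_{\{|\x_a|\geqslant R_q\}}\phi_{q,k'}\|\leqslant\frac{\nu_0}{4d_E^2r_1r_2}$ by \eqref{phi localization}; on the second set, undoing $\pi$ by a change of variables turns $\|I_{\{|\x_a|<R_q\}}\pi\varphi_{q,k,l}\|^2$ into the mass of the one--particle density $\rho_{\psi_{q,l},\pi(a)}$ over the ball of radius $R_q$, which is $<\frac{\nu_0^2}{16d_E^4r_1^2r_2^2}$ by Lemma~\ref{density boundedness lemma} and \eqref{extra delta condition}. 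Hence $|\langle\pi\varphi_{q,k,l},\varphi_{q,k',l'}\rangle|<\frac{\nu_0}{2d_E^2r_1r_2}$ for all $\pi\in\Pi\setminus H$ and all $k,k',l,l'$. Expanding $\varphi$ in the orthonormal basis $\{\varphi_{q,k,l}\}$ of $F_q$, applying Cauchy--Schwarz over the $r_1r_2$ indices, and using $|\xi_E(\pi)|\leqslant d_E$ together with $\#\Pi\leqslant h_\Pi$, this yields the claimed bound on $|\langle B\varphi,\varphi\rangle|$.

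Combining the two steps, $\|P^E\varphi\|^2\geqslant\nu_0\|\varphi\|^2-\frac{\nu_0}{2}\|\varphi\|^2=\frac{\nu_0}{2}\|\varphi\|^2$, so the lemma holds with $C_E:=(\nu_0/2)^{1/2}$. The main obstacle is the first step: one has to recognise that the diagonal part of $P^E$ acts on the whole product subspace $F_q$ as a single strictly positive scalar, and that the coupling hypothesis $(D_1,E_1;D_2,E_2)\underset{Z}{\prec}(D,E)$ is exactly what keeps that scalar bounded away from zero. The off--diagonal estimate is comparatively routine once the localisation inequalities preceding the lemma are in place.
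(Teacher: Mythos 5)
Your proof is correct and follows essentially the same route as the paper's: both split $P^E$ into the part $Q^E$ summed over the subgroup $\Pi^Z_1\times\Pi^Z_2$ and the remainder $R^E$, evaluate $Q^E|_{F_q}$ as a strictly positive scalar $\nu_0$ via representation theory (the paper decomposes $Q^E=\sum_i\nu_iP_i$ with $P_0|_{F_q}=\mathrm{Id}$; you invoke Schur's lemma directly, with an explicit — and correct — character formula for $\nu_0$), and bound $R^E$ by combining the localisation \eqref{phi localization} of $\phi_{q,k}$ with the density spread of $\psi_{q,l}$ from Lemma~\ref{density boundedness lemma} and \eqref{extra delta condition}. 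The only cosmetic difference is that the paper inserts $\prod_{j\in Z_1}I_{\{|\x_j|<R_q\}}$ on the left factor and then drops to a single $I_{\{|\x_{j_0}|<R_q\}}$, whereas you split directly on $\{|\x_a|\gtrless R_q\}$ for the one exchanged coordinate; the estimates and constants come out identical.
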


\begin{proof}{}
The projector \eqref{permutation projection} can be written as
\begin{equation}\label{permutation projection decomposition}
P^E= \frac{d_E}{h_\Pi}\sum_{\pi\in \Pi^Z_1\times \Pi^Z_2}\overline{\xi_E(\pi)}\pi+ \frac{d_E}{h_\Pi}\sum_{\pi\in \Pi\setminus(\Pi^Z_1\times \Pi^Z_2)}\overline{\xi_E(\pi)}\pi.
\end{equation}
We will denote the first term in \eqref{permutation projection decomposition} by $Q^E$, and the second by $R^E$. Then
\begin{equation}\label{quadratic form with Q and R}
\|P^E\varphi\|^2= \langle \varphi, P^E\varphi\rangle= \langle \varphi, Q^E\varphi\rangle+ \langle \varphi, R^E\varphi\rangle.
\end{equation}
Relation $(D_1, E_1; D_2, E_2)\underset Z\prec (D, E)$ implies that the representation $E\arrowvert \Pi^Z_1\times \Pi^Z_2$ is unitarily equivalent to a sum $\underset{i= 0}{\overset k\oplus}n_iE^{(i)}$, where $n_i> 0$ are multiplicities of the irreducible representations $E^{(i)}$ of the group $\Pi^Z_1\times \Pi^Z_2$ with $E^{(0)}= E_1\otimes E_2$. For the corresponding characters this gives
\begin{equation*}\label{character relation}
\xi_E(\pi)= \sum_{i= 0}^kn_i\xi^{(i)}(\pi), \quad\textrm{for all}\quad \pi\in \Pi^Z_1\times \Pi^Z_2.
\end{equation*}
Hence
\begin{equation*}\label{Q_E}
Q^E= \sum_{i= 0}^k\nu_iP_i,
\end{equation*}
where $\nu_i> 0$ and $P_i$ is the projector corresponding to the representation $E^{(i)}$. By construction, $P_0\varphi= \varphi$ for any $\varphi\in F_q$, hence $P_i\varphi= 0$ for $i= 1, \dots, k$. Thus for any $\varphi\in F_q$
\begin{equation}\label{P^E varphi}
\langle\varphi, Q^E\varphi\rangle= \nu_0\|\varphi\|^2, \quad \nu_0> 0.
\end{equation}
We will now estimate the second term on the \rhs of \eqref{quadratic form with Q and R}.
For any $n\in Z_2$ and any $\psi\in \widetilde G_q$ with $\|\psi\|= 1$ by Corollary~\ref{L_2 times rho corollary} and \eqref{extra delta condition} we have
\begin{equation}\label{psi in ball}
\|I_{\{|\x_j|< R_q\}}\psi\|^2\leqslant \frac{\nu_0^2}{16d_E^4r_1^2r_2^2}.
\end{equation}
For any functions \eqref{varphi_q} and any $\pi\in \Pi$ inequality \eqref{phi localization} implies that
\begin{equation*}\label{cross-terms on basis}
\big|\langle\varphi_{q, k, l}, \pi\varphi_{q, \widetilde k, \widetilde l}\rangle\big|\leqslant \langle\prod_{j\in Z_1}I_{\{|\x_j|< R_q\}}|\varphi_{q, k, l}|, \pi|\varphi_{q, \widetilde k, \widetilde l}|\rangle_{L_2(\mathbb R^{3N})}+ \frac{\nu_0}{4d_E^2r_1r_2}.
\end{equation*}
Now if $\pi\in \Pi\setminus(\Pi^Z_1\times\Pi^Z_2)$, then there exists $j_0\in Z_1$ such that $\pi j_0\in Z_2$. Then by \eqref{psi in ball}
\begin{equation*}\label{cross-terms 2}
\langle\prod_{j\in Z_1}I_{\{|\x_j|< R_q\}}|\varphi_{q, k, l}|, \pi|\varphi_{q, \widetilde k, \widetilde l}|\rangle\leqslant \langle|\varphi_{q, k, l}|, I_{\{|\x_{j_0}|< R_q\}}\pi|\varphi_{q, \widetilde k, \widetilde l}|\rangle\leqslant \frac{\nu_0}{4d_E^2r_1r_2}.
\end{equation*}
Thus
\begin{equation}\label{all cross terms}
\big|\langle\varphi_{q, k, l}, \pi\varphi_{q, \widetilde k, \widetilde l}\rangle\big|\leqslant \frac{\nu_0}{2d_E^2r_1r_2}, \quad \pi\in \Pi\setminus(\Pi^Z_1\times\Pi^Z_2).
\end{equation}
Any $\varphi\in F_q$ can be written as \eqref{phi in base}.
By \eqref{all cross terms} and Cauchy inequality for any $\pi\in \Pi\setminus(\Pi^Z_1\times\Pi^Z_2)$
\begin{equation}\label{R term control}
\big|\langle\varphi, \pi\varphi\rangle\big|\leqslant \sum_{k, l, \widetilde k, \widetilde l}|c_{kl}||c_{\widetilde k\widetilde l}|\big|\langle\varphi_{q, k, l}, \pi\varphi_{q, \widetilde k, \widetilde l}\rangle\big|\leqslant \frac{\nu_0}{2d_E^2}\|\varphi\|^2.
\end{equation}
Since the number of elements of $\Pi\setminus(\Pi^Z_1\times\Pi^Z_2)$ does not exceed $d_\Pi$ and for any $\pi$ $\big|\xi_E(\pi)\big|< d_E$ as a trace of unitary matrix of dimension $d_E$, \eqref{R term control} implies that
\begin{equation*}\label{R^E estimate}
\big|\langle\varphi, R^E\varphi\rangle\big|\leqslant \nu_0\|\varphi\|^2/2.
\end{equation*}
By \eqref{quadratic form with Q and R} and \eqref{P^E varphi} we conclude that \eqref{permutation projection lower bound} holds with $C_E= \sqrt{\nu_0/2}$.
\end{proof}

Lemmata~\ref{unsymmetrized Weyl lemma} and \ref{permutation projection lower bound lemma} imply that $L_q:= P^EK_q$ is a nontrivial subspace of $\Dom \mathcal{H}_N^{D, E}$ and for every $f= P^E\varphi\in L_q$
\begin{equation*}\label{Weyl relation}
\big\|(\mathcal H_N^{D, E}- \lambda)f\big\|\leqslant \big\|(\mathcal H_N- \lambda)\varphi\big\|\leqslant 5q^{-1}r_1^{\frac12}r_2^{\frac12}\|\varphi\|\leqslant 5q^{-1}r_1^{\frac12}r_2^{\frac12}C_E^{-1}\|f\|, \quad q\in\mathbb N.
\end{equation*}
This implies that $\lambda\in \sigma(\mathcal H_N^{D, E})$, and thus finishes the proof of Theorem~\ref{HVZ theorem}.
\paragraph{Acknowledgement.} The author was supported by the DFG grant SI 348/12--2. Part of this work was done during the stay at the Erwin Schr\"odinger Institute, Vienna.

\bibliographystyle{plain}
\bibliography{Bericht}

\def\cprime{$'$} \def\cprime{$'$} \def\cprime{$'$} \def\cprime{$'$}
  \def\cprime{$'$}
\begin{thebibliography}{10}

\bibitem{Adams1975}
Robert~A. Adams.
\newblock {\em Sobolev spaces}.
\newblock Academic Press [A subsidiary of Harcourt Brace Jovanovich,
  Publishers], New York-London, 1975.
\newblock Pure and Applied Mathematics, Vol. 65.

\bibitem{BalinskyEvans1999}
A.~A. Balinsky and W.~D. Evans.
\newblock Stability of one-electron molecules in the {B}rown-{R}avenhall model.
\newblock {\em Comm. Math. Phys.}, 202(2):481--500, 1999.

\bibitem{BetheSalpeter1957}
Hans~A. Bethe and Edwin~E. Salpeter.
\newblock {\em Quantum mechanics of one- and two-electron atoms}.
\newblock Springer-Verlag, Berlin, 1957.

\bibitem{BrownRavenhall1951}
G.~E. Brown and D.~G. Ravenhall.
\newblock On the interaction of two electrons.
\newblock {\em Proc. Roy. Soc. London Ser. A.}, 208:552--559, 1951.

\bibitem{BurenkovEvans1998}
V.~I. Burenkov and W.~D. Evans.
\newblock On the evaluation of the norm of an integral operator associated with
  the stability of one-electron atoms.
\newblock {\em Proc. Roy. Soc. Edinburgh Sect. A}, 128(5):993--1005, 1998.

\bibitem{EvansPerrySiedentop1996}
William~Desmond Evans, Peter Perry, and Heinz Siedentop.
\newblock The spectrum of relativistic one-electron atoms according to {B}ethe
  and {S}alpeter.
\newblock {\em Comm. Math. Phys.}, 178(3):733--746, 1996.

\bibitem{GriesemerLewisSiedentop1999}
Marcel Griesemer, Roger~T. Lewis, and Heinz Siedentop.
\newblock A minimax principle for eigenvalues in spectral gaps: {D}irac
  operators with {C}oulomb potentials.
\newblock {\em Doc. Math.}, 4:275--283 (electronic), 1999.

\bibitem{Herbst1977}
Ira~W. Herbst.
\newblock Spectral theory of the operator
  {$(p\sp{2}+m\sp{2})\sp{1/2}-Ze\sp{2}/r$}.
\newblock {\em Comm. Math. Phys.}, 53(3):285--294, 1977.

\bibitem{HoeverSiedentop1999}
Georg Hoever and Heinz Siedentop.
\newblock Stability of the {B}rown-{R}avenhall operator.
\newblock {\em Math. Phys. Electron. J.}, 5:Paper 6, 11 pp. (electronic), 1999.

\bibitem{Jakubassa2005}
D.~H. Jakubassa-Amundsen.
\newblock Localization of the essential spectrum for relativistic
  {$N$}-electron ions and atoms.
\newblock {\em Doc. Math.}, 10:417--445 (electronic), 2005.

\bibitem{Jakubassa2007}
Doris~H. Jakuba{\ss}a-Amundsen.
\newblock The {HVZ} theorem for a pseudo-relativistic operator.
\newblock {\em Ann. Henri Poincar\'e}, 8(2):337--360, 2007.

\bibitem{JorgensWeidmann1973}
Konrad J{\"o}rgens and Joachim Weidmann.
\newblock {\em Spectral properties of {H}amiltonian operators}.
\newblock Springer-Verlag, Berlin, 1973.
\newblock Lecture Notes in Mathematics, Vol. 313.

\bibitem{LewisSiedentopVugalter1997}
Roger~T. Lewis, Heinz Siedentop, and Simeon Vugalter.
\newblock The essential spectrum of relativistic multi-particle operators.
\newblock {\em Ann. Inst. H. Poincar\'e Phys. Th\'eor.}, 67(1):1--28, 1997.

\bibitem{MorozovVugalter2006}
Sergey Morozov and Semjon Vugalter.
\newblock Stability of atoms in the {B}rown-{R}avenhall model.
\newblock {\em Ann. Henri Poincar\'e}, 7(4):661--687, 2006.

\bibitem{Simon1977CMP}
Barry Simon.
\newblock Geometric methods in multiparticle quantum systems.
\newblock {\em Comm. Math. Phys.}, 55(3):259--274, 1977.

\bibitem{Sucher1980}
J.~Sucher.
\newblock Foundations of the relativistic theory of many-electron atoms.
\newblock {\em Phys. Rev. A (3)}, 22(2):348--362, 1980.

\bibitem{Sucher1987}
J.~Sucher.
\newblock Relativistic many-electron hamiltonians.
\newblock {\em Phys. Scripta}, 36:271--281, 1987.

\bibitem{Thaller1992}
Bernd Thaller.
\newblock {\em The {D}irac equation}.
\newblock Texts and Monographs in Physics. Springer-Verlag, Berlin, 1992.

\bibitem{Tix1997}
C.~Tix.
\newblock Lower bound for the ground state energy of the no-pair {H}amiltonian.
\newblock {\em Phys. Lett. B}, 405(3-4):293--296, 1997.

\bibitem{Tix1998}
C.~Tix.
\newblock Strict positivity of a relativistic {H}amiltonian due to {B}rown and
  {R}avenhall.
\newblock {\em Bull. London Math. Soc.}, 30(3):283--290, 1998.

\bibitem{VugalterWeidl2003}
Semjon Vugalter and Timo Weidl.
\newblock On the discrete spectrum of a pseudo-relativistic two-body pair
  operator.
\newblock {\em Ann. Henri Poincar\'e}, 4(2):301--341, 2003.

\end{thebibliography}
\Addresses
\end{document}